\documentclass[12pt]{article}
\usepackage[T1]{fontenc}
\usepackage{ae,aecompl}
\usepackage{color}
\usepackage{amsmath,amsthm,amssymb,enumerate,mathrsfs}
\usepackage{here,float}
\usepackage{subcaption}

\usepackage{amsfonts}
\usepackage{graphicx}
\usepackage{lmodern}
\usepackage[title]{appendix}

\usepackage{chngcntr}
\usepackage{caption}
\usepackage{ae,aecompl}
\usepackage[a4paper, margin=3.3cm]{geometry}
\usepackage[latin1]{inputenc}
\newtheorem{rem}{Remark}
\usepackage{enumerate}
\newtheorem{prop}{Proposition}

\newtheorem{Ass}{Assumption}
\newtheorem{Corollary}{Corollary}

\def\bkR{{\rm I\kern-.17em R}}
\def\QED{\begin{flushright} QED \end{flushright}}

\def \1n{1\hskip -3pt \mbox{N}}

\def \Frac {\displaystyle \frac }
\def \Int {\displaystyle \int }
\def \Sum {\displaystyle \sum }

%\onehalfspacing

\begin{document}

\begin{titlepage}
\thispagestyle{empty}

\title{SIR Model with Stochastic Transmission}
\author{Gourieroux, C.,$^{(1)}$ and  Y.,Lu$^{(2)}$ }
\date{(Preliminary version, November 09, 2020)}\maketitle \vspace{9cm}

C. Gourieroux acknowledges the financial support of the chair ACPR : "Regulation and Systemic Risks", and of the Agence Nationale de la Recherche (ANR-COVID), grant ANR-17-EUR-0010. Y. Lu thanks the CNRS and the Labex MME-DII for d\'el\'egation grants. 

\addtocounter{footnote}{1} \footnotetext{University of Toronto, Toulouse School of Economics and CREST. Email: christian.gourieroux@ensae.fr}
\addtocounter{footnote}{1}\footnotetext{Universit\'e Sorbonne Paris Nord, France and Concordia University, Canada. Email: yang.lu@univ-paris13.fr}
%\addtocounter{footnote}{1}\footnotetext{CREST. }

\end{titlepage}
\newpage
\begin{center}
SIR Model with Stochastic Transmission\\
%\\[10mm]
\textbf{Abstract}
\end{center}
\vspace{1em}

The Susceptible-Infected-Recovered (SIR) model is the cornerstone of epidemiological models. However, this specification depends on two parameters only, which implies a lack of flexibility and the difficulty to replicate the volatile reproduction numbers observed in practice. We extend the classic SIR model by introducing nonlinear stochastic transmission, to get a stochastic SIR model. We derive its exact solution and discuss the condition for herd immunity. The stochastic SIR model corresponds to a population of infinite size. When the population size is finite, there is also sampling uncertainty. We propose a state-space framework under which we analyze the relative magnitudes of the observational and stochastic epidemiological uncertainties during the evolution of the epidemic. We also emphasize the lack of robustness of the notion of herd immunity when the SIR model is time discretized.\vspace{1em}

\textbf{Keywords~:} SIR Model, Population-at-Risk, Semi-Parametric Model, Mover-Stayer Model, Herd Immunity, Reproductive Number, Stochastic Transmission, Systemic Risk.

\newpage

\section{Introduction}
The Susceptible-Infected-Recovered (SIR) model, introduced by Kermack, McKendrick (1927), is the cornerstone of epidemiological models. They extend the initial smallpox epidemiological model of Bernoulli (1760) by introducing the three compartments, S, I, R. It is still largely used in practice and has been extended in various directions, generally by including more states such as the possibility of decease, or the medical treatment states [Djogbenou et al. (2020)], or multiple infectious states [Anderson, May (1991)], or by also considering vital dynamics, i.e. the exogenous birth and death processes [see e.g. Allen (1994), Section 5, Harko et al. (2016)], temporary immunity [Xu, Li (2018)], or heterogeneity [Alipoor, Boldea (2020)].\\

The success of the basic SIR model (and of its several extensions) is largely due to the key roles of the notions of reproductive number and herd immunity, that become important indicators for health policy when a new disease (as COVID-19) appears. However, these notions are model based, that is, their interpretation and use strongly depend on our belief on the SIR model. Moreover, it has been observed that daily estimations of the reproductive number are highly volatile, a stylized fact that is not really compatible with the classic SIR model. The aim of this paper is to extend the standard SIR model in two directions: first we allow the transmission of the disease to be captured by a function (transmission function), instead of being summarized by a scalar. This leads to a semi-parametric SIR model able to capture the mover-stayer phenomenon. In a second step this transmission function is made stochastic in order to introduce an infinite dimensional uncertainty in the basic deterministic SIR model. \vspace{1em}

We introduce the Semi-Parametric SIR (SPSIR) model in Section 2. Following the approach in Harko, Lobo, Mak (2016), we derive the exact solution of the SPSIR model. This solution is obtained by solving appropriate integral equations. The forms of these equations are used to discuss the conditions for herd immunity and to derive the final sizes, if applicable. Whereas in the basic SIR model the final sizes (long run parameters) are in a one-to-one relationship with the reproductive number at the beginning of the epidemic (a short run parameter), this undesirable constraint no longer exists in the general case. This means that some standard consequences of the SIR model are not robust to small changes of the transmission function. The transmission function is made stochastic in Section 3 leading to the SIR model with stochastic transmission. As a consequence all standard notions such as the evolution of infected people, of recovered people, the reproductive number(s), the existence of herd immunity and the final sizes become stochastic. Simulations of disease trajectories are provided in Section 4. We first compare simulations of the SIR model with stochastic transmission performed from the continuous time model and its Euler discretization. Then we perform a Monte-Carlo study to analyse the uncertainties on the reproductive ratios, the time and length of the peak of infection and the herd immunity ratio.
%Statistical inference is discussed in Section 5.
The SIR model with stochastic transmission assumes a population of infinite size. In practice, the size of the population is large, but finite. In Section 5, we introduce the model for count observations associated with the SIR model with stochastic transmission. We first rewrite the model as a (nonlinear) state space models with two layers of state equations. This state space representation is used to simulate trajectories of virtual observed counts. This allows for comparing the relative magnitudes of the epidemiological and demographic uncertainties and how they evolve during the progression of the epidemic. %to facilitate the statistical inference.
 Section 6 concludes. Proofs are gathered in Appendices. In particular we compare the continuous time SIR model with its crude time discretized version to show that the conditions for herd immunity are not robust to time discretization.

\section{Semi-Parametric SIR Model}
The aim of this section is to extend the standard SIR model by introducing a functional transmission parameter. We first derive the solution of the SPSIR model. Then we discuss the existence of herd immunity, the expression of the final sizes, and the reproduction number(s).

\subsection{The model}
\setcounter{equation}{0}\def\theequation{2.\arabic{equation}}

A SIR model considers the evolution of proportions with three compartments: S: Susceptible (not infected, not immunized (i.e. the Population-at-Risk)), I: Infected, Infectious (not immunized), R: Recovered (immunized after the infection). The (theoretical) proportions of the population in each compartment at date $t$ are denoted $x(t), y(t), z(t).$ The SPSIR model is a differential system of the type~:

\begin{equation}
  \left\{
  \begin{array}{lcl}
  \Frac{d x(t)}{dt} & = & - a [x(t)] y (t), \\ \\
  \Frac{dy(t)}{dt} & = & a[x(t)] y(t) - c y(t), \\ \\
  \Frac{dz(t)}{dt} & = & c y (t),
  \end{array}
  \right.
\end{equation}

\noindent where function $a$, called the transmission function, characterizes the instantaneous  transmission rate of infection, and $c$ is a constant rate of recovering. Throughout the rest of this paper, we assume that:
\begin{Ass}
 Function $a$ is strictly positive on $]0,1]$, such that $a(0)=0$ (no susceptible, no possibility of infection), and scalar $c$ is strictly positive.
\end{Ass}

Since $x(t) + y(t)+z(t)=1$, the system is such that $\Frac{dx(t)}{dt} + \Frac{dy(t)}{dt} + \Frac{dz(t)}{dt} = 0$. We have:%Moreover, by assumption $\Frac{dx(t)}{dt} \leq 0, \Frac{dz(t)}{dt} \geq 0$. Therefore $x(t)$ [resp. $z(t)$] is a decreasing (resp. increasing) function of time. Since these functions are bounded, we get the following result.\vspace{1em}

\begin{prop} 
	If $x(0) >0, y(0)>0$, a positive solution $[x(t), y(t), z(t)]$ of the SPSIR is such that :
\begin{itemize}
\item $x(t)$ \text{\normalfont [}resp. $z(t)$\text{\normalfont]} is a decreasing \text{\normalfont [}resp. increasing\text{\normalfont ]} function of time.

\item In particular $x(t)$ \text{\normalfont [}resp. $z(t)$\text{\normalfont]} tends to a limit $x(\infty)$ \text{\normalfont [}resp. $z(\infty)$\text{\normalfont]}, when $t$ tends to infinity. $y(t)$ tends to $y(\infty) = 1-x(\infty) - z(\infty)$.

\item Moreover all derivatives $\Frac{dx (t)}{dt}, \Frac{dy (t)}{dt}, \Frac{dz(t)}{dt}$ tend to zero, when $t$ tends to infinity, and, by writing (2.1) for $t=\infty$, we deduce that $y(\infty) = 0.$
\end{itemize}
\end{prop}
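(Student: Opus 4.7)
The plan is to establish the three bullet points in sequence, using monotone convergence for the existence of limits and an integrability argument for the vanishing of $y$ at infinity.

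First, I would verify strict positivity of $y(t)$. The second equation of (2.1) reads as a linear ODE in $y$ with time-dependent coefficient $a[x(t)]-c$, so
\[
y(t) = y(0)\exp\!\Bigl(\int_0^t \bigl(a[x(s)]-c\bigr)\,ds\Bigr)>0
\]
for all $t\ge 0$ whenever $y(0)>0$. Positivity of $x$ is part of the hypothesis on the solution, but it can also be seen from ODE uniqueness: since $a(0)=0$ by Assumption 1, the identically zero function is another solution of $dx/dt=-a[x]y$, so a trajectory starting from $x(0)>0$ cannot reach the zero line in finite time.

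The monotonicity claims then follow directly from the signs in (2.1): $dx/dt=-a[x(t)]y(t)\le 0$ and $dz/dt=cy(t)\ge 0$, with strict inequality whenever $x(t)>0$. Since $x$ is decreasing and bounded below by $0$, the limit $x(\infty):=\lim_{t\to\infty}x(t)$ exists; similarly $z$ is increasing and bounded above by $1$, so $z(\infty)$ exists. The conservation relation $x(t)+y(t)+z(t)=1$ then yields existence of $y(\infty)=1-x(\infty)-z(\infty)$.

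For the final bullet I would integrate $dz/dt=cy(t)$ from $0$ to $t$ to obtain $z(t)-z(0)=c\int_0^t y(s)\,ds$. Boundedness of $z$ forces the improper integral $\int_0^\infty y(s)\,ds$ to be finite. Since $y$ is continuous and already known to converge to $y(\infty)$, the only way this integral can converge is $y(\infty)=0$: a continuous function with a strictly positive limit at infinity is not integrable. Once $y(\infty)=0$ is established, $dz/dt=cy(t)\to 0$; continuity and boundedness of $a$ on $[0,1]$ then give $dx/dt=-a[x(t)]y(t)\to 0$ and $dy/dt=(a[x(t)]-c)y(t)\to 0$. Equivalently, passing to the limit in the third equation of (2.1) yields $0=cy(\infty)$, reproducing $y(\infty)=0$ as in the statement.

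The only real subtlety is this last step: convergence of the derivatives is not automatic from convergence of $x,y,z$ themselves, and one must exploit the integrability of $y$ coming from the $z$-equation together with boundedness of $z$ in order to force $y(\infty)=0$. Everything else is standard monotone-convergence bookkeeping.
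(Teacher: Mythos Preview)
Your argument is correct and in fact more careful than what the paper offers. The paper does not give a separate proof of this proposition: the reasoning is folded into the statement itself, which simply asserts that all three derivatives tend to zero and then reads off $y(\infty)=0$ from the third equation of (2.1) evaluated at $t=\infty$. You reverse this order: you first establish $y(\infty)=0$ from the integrability of $y$ (via $z(t)-z(0)=c\int_0^t y(s)\,ds\le 1$) together with the existence of $\lim y(t)$, and only then conclude that the derivatives vanish. This is the right way round, and you correctly flag that the paper's implicit step ``the derivatives tend to zero'' is not automatic from the mere convergence of $x,y,z$. Your route closes precisely the gap the paper leaves open.

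One small inaccuracy worth noting: your parenthetical remark that, by ODE uniqueness and $a(0)=0$, a trajectory with $x(0)>0$ cannot hit zero in finite time is not valid at the level of generality of Assumption~1. The paper explicitly allows transmission functions such as $a(x)=\alpha x^\beta$ with $\beta<1$, for which $a$ is not Lipschitz at $0$; Proposition~4\,$iii)$ shows that in such cases $x(t)$ can reach $0$ at a finite extinction time $T_\infty$. This does not damage your main proof, since you rightly observe that positivity is part of the hypothesis and, in any case, the monotonicity and limit statements remain valid even when $x(\infty)=0$ is attained in finite time.
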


\textbf{Example 1~:} The basic SIR model introduced in Kermack, McKendrick (1927) [see also Hethcote (2000), Brauer, Castillo-Chavez (2001))] corresponds to the linear case $a(x) = \alpha x, \alpha >0$.\vspace{1em}

\textbf{Example 2~:} This basic model can be extended by considering power function of $x$ as : $a(x) = \alpha x^\beta, \alpha > 0, \beta > 0.$ This is the power law of mass action initially introduced in Wilson, Worcester (1945), p26 [see also Stroud et al.(2006)]. These models have a frailty representation (see Appendix D).

\begin{rem} 
	The nonlinear incidence rate in the first equation of (2.1) is introduced w.r.t. $x(t)$. A part of the epidemiological literature has also considered nonlinearities w.r.t. $y$ in order to account for crowding effects [see e.g. Li et al. (2015), Fan et al. (2017)], or network heterogeneity [Roy, Pascual (2006)]. We keep the linearity in $y$ to allow for exact solution of the epidemic model.
\end{rem}

\paragraph{A survivor function interpretation for $x(t)$.} Function $x(t)$ is decreasing in time and bounded by 0 and 1, and thus has the interpretation of a survivor function. Indeed, under the assumption that all individuals are exchangeable at the beginning of the epidemic, it is the probability that a given individual is still susceptible at time $t$. This suggests to rewrite the first equation of the differential system (2.1) into:

\begin{equation}
\Frac{d \log x(t)}{dt}= - \frac{a [x(t)]}{x(t)} y (t),
\end{equation}
where $\log x(t)$ is minus the cumulative hazard function, whereas $\frac{a [x(t)]}{x(t)} y (t)$ is the (instantaneous) hazard function. Thus, the benchmark SIR model assumes that the hazard function is independent of $x$, but is proportional in $y$, which corresponds to a Cox proportional hazard model [Cox (1972)] in epidemiological time $x(t)$.   
 
Another important special case is when $\frac{a(x)}{x}$ is increasing in $x$. In other words, as the epidemic advances and $x(t)$ decreases, the partial derivative of the hazard function with respect to $y(t)$ decreases, reflecting the fact that the surviving individuals are \textit{ex post} more robust for a fixed number of contacts. Hence the SPSIR model can capture the effect of vulnerability heterogeneity, that is the frailty effect and the mover-stayer phenomenon [see e.g. Blumen, Kogan, McCarthy (1955), Goodman (1965)]. A decreasing pattern of $\frac{a(x)}{x}$, though less likely in practice, could be potentially motivated by a mutation of the virus which makes it more and more dangerous as the epidemics progresses. \vspace{1em}

\textbf{Example 3~:} The mover-stayer phenomenon can be simply represented by a piecewise transmission function~: $a(x) = \alpha_1 x,\; \mbox{if}\; x>\gamma, = \alpha_2 x, \; \mbox{if}\; x<\gamma$, with $\alpha_2 < \alpha_1$. $\alpha_1$ is the transmission rate for highly vulnerable (frail) individuals, the first to be infected, and $\alpha_2$ for the least vulnerable (frail) individuals.\vspace{1em}

\textbf{Example 2 (cont.)~:} The condition $\frac{a(x)}{x}$ increasing in $x$ is equivalent to the inequality $\beta > 1.$\vspace{1em}

%\textbf{Example 4~:} 

%The second equation in system (2.1) also implies general patterns for the theoretical proportion of infected people.\vspace{1em}

\paragraph{Behavior of $y(t)$. } Let us now study how the behavior of $y(t)$ over time depends on the functional parameter $a(\cdot)$. We have the following proposition: 
\begin{prop}
	%Let us assume that $c<a(1)$ so that $a^{-1} (c)$ is smaller than 1. Then:
%	 Let us assume that $a(.)$ is an increasing function. Then:
	 \begin{itemize}
	  \item If $a^{-1} (c) \in ]0,x(0)[$, the proportion $y(t)$ is first increasing, then decreasing, with a peak at $y = a^{-1} (c).$ Since $x(0) \leq 1$, a necessary condition for this to happen is $a^{-1}(c)<1$, or $c<a(1)$.
\item If $a^{-1} (c) > x(0)$, the proportion $y(t)$ is decreasing. In particular, a sufficient condition for this to happen is  $c>a(1)$.\vspace{1em}
\end{itemize}
\end{prop}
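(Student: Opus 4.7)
The plan is to work directly from the second equation of system (2.1), which factors as
\[
\frac{dy(t)}{dt} \;=\; y(t)\bigl(a[x(t)] - c\bigr).
\]
Because $y(0)>0$, integrating this linear ODE in $y$ gives $y(t)=y(0)\exp\bigl(\int_0^t (a[x(s)]-c)\,ds\bigr)>0$ for all $t\ge 0$. Hence the sign of $dy/dt$ coincides with the sign of $a[x(t)]-c$, and the whole proposition reduces to tracking when $a[x(t)]$ crosses the threshold $c$.

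Next I would use Proposition~1: $x(t)$ is strictly decreasing, starting from $x(0)$ and tending to some limit $x(\infty)\in[0,x(0))$. Assuming (as is implicit in the use of $a^{-1}(c)$) that $a$ is continuous and strictly increasing, $a[x(t)]$ is then strictly decreasing in $t$, with values in $(a[x(\infty)],a[x(0)]]$. The analysis splits naturally along whether the constant $c$ lies inside or above this interval, i.e.\ whether $a^{-1}(c)<x(0)$ or $a^{-1}(c)>x(0)$.

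For the first case, $a^{-1}(c)\in\,]0,x(0)[\,$ means $a[x(0)]>c$, so initially $dy/dt>0$ and $y$ is increasing. The crucial point, which is the main obstacle, is to show that $a[x(t)]$ actually falls below $c$ in finite time (so that a true peak forms). I would argue by contradiction: if $a[x(t)]>c$ for all $t\ge 0$ then $dy/dt\ge 0$, so $y(t)$ would be bounded below by $y(0)>0$, contradicting the conclusion $y(\infty)=0$ from Proposition~1. Hence $x(\infty)\le a^{-1}(c)$, and by the intermediate value theorem there is a unique $t^\star$ with $x(t^\star)=a^{-1}(c)$. For $t<t^\star$, $x(t)>a^{-1}(c)$ gives $dy/dt>0$; for $t>t^\star$, the reverse inequality gives $dy/dt<0$. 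So $y$ peaks at $t^\star$, precisely when $x(t^\star)=a^{-1}(c)$ (this is the value of $x$ at the peak; the statement of the proposition appears to contain a harmless typo writing $y=a^{-1}(c)$). The necessary condition $c<a(1)$ is immediate: $a^{-1}(c)<x(0)\le 1$ and $a$ increasing yield $c=a(a^{-1}(c))<a(1)$.

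For the second case, $a^{-1}(c)>x(0)$ gives $c>a[x(0)]\ge a[x(t)]$ for every $t\ge 0$, so $dy/dt<0$ throughout and $y$ is monotonically decreasing. The sufficient condition $c>a(1)$ then follows from $a(1)\ge a[x(0)]$, since $x(0)\le 1$. Writing out these two cases completes the proof; no delicate regularity arguments are needed beyond monotonicity of $a$, continuity of $x(t)$, and the already established fact $y(\infty)=0$.
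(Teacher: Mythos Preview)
Your proof is correct and follows the natural approach the paper implicitly relies on: factor $dy/dt = y(t)\bigl(a[x(t)]-c\bigr)$, use the positivity of $y$ and the monotone decrease of $x(t)$ from Proposition~1, and read off the sign changes. The paper does not spell out a proof of this proposition, so your contradiction argument invoking $y(\infty)=0$ to guarantee that $x(t)$ actually crosses the level $a^{-1}(c)$ in finite time is a useful piece of rigor that the paper leaves to the reader.
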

In the first case, the proportion of susceptible people and then the proportion of recovered people (by the third equation) have a sigmoidal pattern, extending the logistic pattern initially introduced in Feller (1940) as a basic law of growth in biology. By considering different nonlinear transmission functions $a$, the SPSIR model allows for different growth models in this class of sigmoidal patterns [see e.g., Richards (1959), Kahi et al. (2003) for such growth functions].\vspace{1em}

%Note, also that since $x(0)<1$, the inequality $a^{-1} (c) \in ]0,x(0)[$ can be satisfied only if $a^{-1}(c)<1$, or equivalently $c<a(1)$. \vspace{1em}

The condition in Proposition 2 depends on the value of transmission function $a$ at $x=1$, that is, close to the early phases of the epidemic. %It has been noted in the literature that this condition has a local interpretation [see e.g.]. 
Indeed, at the beginning of the epidemic, system (2.1) can be linearized in a neighbourhood of $x(0) =1, y(0)=0,$ as~:

\begin{equation}
\label{linearized}
\left\{
\begin{array}{lcl}
\Frac{d x(t)}{dt} & = & - a (1) y (t), \\ \\
\Frac{dy(t)}{dt} & = & [a(1)-c] y(t), \\ \\
\Frac{dz(t)}{dt} & = & c y (t).
\end{array}
\right.
\end{equation}

The linearized dynamics for $y(t)$ is stable, iff $a(1) <c$. Therefore the condition in Proposition 2 is the same as the stability condition of this linearized version at the beginning of the epidemic. Intuitively, under this stability condition, we cannot escape far away from the initial state (see a more detailed discussion in Section 2.4, Proposition 5).

\subsection{Solution to the SPSIR Model}

The SPSIR model has a closed form solution that can be derived from the initial condition $(x(0), y(0), z(0)),$ following the approach in Harko, Lobo, Mak (2016).\vspace{1em}

\begin{prop}[Solution for $x(t)$] Let us assume $y(0)>0,$ and the transmission function $a$ strictly positive on $]0,1]$. %and $1/a(x)$ is non integrable for $x=0$; 
Then $x(t)$ is the solution of the integral equation~:

\begin{equation}
\label{integralequation}
  t = \Int^{x(t)}_{x(0)} \Frac{dv}{a(v) \{ v-x(0) - y(0) - c \Int^v_{x(0)} \Frac{du}{a(u)}\}}.
\end{equation}
\end{prop}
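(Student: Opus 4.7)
The plan is to use the classical trick for the Kermack--McKendrick model, which extends verbatim to the semi-parametric setting: I would eliminate time between the first two equations of (2.1) so that $y$ becomes an explicit function of $x$, then separate variables in the first equation to obtain $t$ as an integral against $dv$. The key observation is that when one forms $dy/dx$, the factor $a(x)$ cancels, leaving an ODE whose right-hand side depends only on $c$ and $a$.

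First I would check that $y$ stays strictly positive on the whole trajectory: since $y(0) > 0$ and the second equation of (2.1) is linear in $y$ with coefficient $a[x(t)] - c$, one has $y(t) = y(0)\exp\!\bigl(\int_0^t (a[x(s)] - c)\,ds\bigr) > 0$, which together with Assumption~1 gives $dx/dt = -a(x) y < 0$ and hence strict monotonicity of $x$, so that $t \mapsto x(t)$ has a differentiable inverse on its image. I would then divide the second equation of (2.1) by the first to obtain
\begin{equation*}
\frac{dy}{dx} = \frac{a(x) y - c y}{-a(x) y} = -1 + \frac{c}{a(x)},
\end{equation*}
and integrate from $x(0)$ to a generic value $v = x(t)$, producing the explicit relation
\begin{equation*}
y = y(0) + x(0) - v + c \int_{x(0)}^{v} \frac{du}{a(u)}.
\end{equation*}

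Finally, I would separate variables in the first equation as $dt = -dv/[a(v) y]$, substitute the expression for $y$ just found, and integrate from $0$ to $t$. Absorbing the minus sign into the braced quantity then yields
\begin{equation*}
t = \int_{x(0)}^{x(t)} \frac{dv}{a(v)\bigl\{ v - x(0) - y(0) - c \int_{x(0)}^{v} \frac{du}{a(u)} \bigr\}},
\end{equation*}
which is precisely (2.4). The main point requiring care, and essentially the only obstacle, is to ensure that neither factor of the denominator vanishes on the path of integration: the factor $a(v)$ is strictly positive on $]0,1]$ by Assumption~1, while the braced quantity is exactly $-y$ viewed as a function of $v$, hence strictly negative by the positivity of $y$ established above. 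Once these sign conditions are in hand, the change of variable $t \leftrightarrow v$ is legitimate and no deeper difficulty remains.
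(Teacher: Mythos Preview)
Your argument is correct and complete: the positivity of $y$, the reduction $dy/dx = -1 + c/a(x)$, the integration to recover $y$ as a function of $v$, and the separation of variables in the first equation all go through exactly as you describe, and the sign check on the denominator is the right point to flag.

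Your route is, however, genuinely different from the paper's. Appendix~A follows Harko, Lobo, Mak (2016) and proceeds by first eliminating $y$ to obtain a \emph{second-order} ODE for $x(t)$ alone (their (a.3)), then introducing the inverse time deformation $u(x)$, deriving a Riccati-type equation for $du/dx$, reducing that to a linear first-order ODE for $w(x) = 1/(du/dx)$, and solving by variation of constants. Only at the very end is the integration constant fixed from the initial data, yielding (2.4). Your approach short-circuits all of this: by computing $dy/dx$ directly you obtain what the paper calls Corollary~1 \emph{first}, and then a single separation of variables gives Proposition~3. In effect you reverse the paper's logical order (Proposition~3 $\Rightarrow$ Corollary~1 becomes Corollary~1 $\Rightarrow$ Proposition~3), and in doing so you avoid the second-order ODE, the Riccati reduction, and the variation-of-constants step entirely. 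The paper's longer route buys an explicit connection to the Riccati/time-deformation machinery of Harko et al.\ (2014, 2016), which may matter for extensions; your route is more elementary and makes the structure of the denominator (namely that the braced quantity is just $-y$) transparent from the outset rather than emerging at the end of a calculation.
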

\begin{proof}
	See Appendix A.
\end{proof}
The integration on the right hand side is with respect to $v$, which is a proportion, instead of time $t$. This can be understood from the change of time used in the proof. The calendar time has no epidemiological interpretation and has to be replaced by the epidemiological, intrinsic time $x(t)$ that accounts for the speed of transmission (see also the above interpretation of $x(t)$ in terms of survivor function). \\

\textbf{Example 1 (cont.)~:} If $a(x) = \alpha x,$ the integral equation becomes~:

\begin{equation}
  t = \Int^{x(t)}_{x(0)} \Frac{dv}{v\{ \alpha v - \alpha [x(0) + y(0)] - c [\log v -\log x(0)]\}}.
\end{equation}

This corresponds to eq. (26) in Harko et al. (2016).\\

\textbf{Example 2 (cont.)~:} If $a(x) = \alpha x^\beta, \alpha > 0, \beta > 1$ (to ensure the non integrability of ${1}/{a(x)}$ at $0$), the integral equation becomes~:

\begin{equation}
  t = \Int^{x(t)}_{x(0)} \Frac{dv}{v^\beta \{ \alpha v - \alpha [x(0) + y(0)] - \Frac{c}{ (1-\beta)} [v^{1-\beta} - x(0)^{1-\beta}]\}}.
\end{equation}
\vspace{1em}

The solution for $y(t)$ is easily deduced. Indeed, from system (2.1), we get~: $\Frac{dy(t)}{dx(t)} = -1 + \Frac{c}{a[x(t)]}$, and by integration, we deduce the following Corollary: 

\begin{Corollary}[Solution for $y(t)$] Under the conditions of Proposition 3, we get~:
\begin{equation}
\label{fromytox}
  y(t) - y(0) = -x(t) + x(0) + c \Int^{x(t)}_{x(0)} \Frac{du}{a(u)}.
\end{equation}
\end{Corollary}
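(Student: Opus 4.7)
The plan is to follow the hint given just before the Corollary: divide the second equation of (2.1) by the first, obtain an ODE for $y$ as a function of $x$, and integrate. Since the statement is essentially a first integral of the SPSIR system along trajectories, no delicate analysis is needed; the only task is to justify the manipulations.

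First, I would argue that $x(t)$ can legitimately be used as a new independent variable. From Assumption 1, $a$ is strictly positive on $]0,1]$, and by Proposition 1 applied to initial data $x(0), y(0)>0$, the trajectory satisfies $x(t), y(t) > 0$ for every finite $t \geq 0$. The first equation of (2.1) then gives $\frac{dx(t)}{dt} = -a[x(t)]\, y(t) < 0$, so $t \mapsto x(t)$ is strictly decreasing and continuously differentiable, and hence a $C^1$-diffeomorphism of $[0,\infty)$ onto its image. This allows me to view $y$ as a $C^1$ function of $x$ along the trajectory.

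Next, I would take the quotient of the second and first equations of (2.1). Because $y(t) \neq 0$, the common factor $y(t)$ cancels, and by the chain rule
\begin{equation*}
\frac{dy}{dx} \;=\; \frac{dy(t)/dt}{dx(t)/dt} \;=\; \frac{a[x]\,y - c\,y}{-a[x]\,y} \;=\; -1 + \frac{c}{a(x)}.
\end{equation*}
Integrating both sides with respect to $u$ from $x(0)$ down to $x(t)$ yields
\begin{equation*}
y(t) - y(0) \;=\; -[x(t) - x(0)] + c \int_{x(0)}^{x(t)} \frac{du}{a(u)},
\end{equation*}
which is precisely the stated identity.

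The one point that calls for a brief check is the integrability of $1/a$ on the range of integration. For any finite $t$ this is immediate: $a$ is continuous and strictly positive on the compact interval $[x(t), x(0)]$, so $1/a$ is bounded there. There is no real obstacle; the Corollary is a direct phase-plane consequence of the monotonicity already established in Proposition 1 together with the elementary ODE manipulation above.
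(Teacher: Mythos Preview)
Your proof is correct and follows exactly the same route as the paper: divide the second equation of (2.1) by the first to obtain $\dfrac{dy}{dx}=-1+\dfrac{c}{a(x)}$, then integrate from $x(0)$ to $x(t)$. Your additional justifications (strict monotonicity of $x(t)$ and boundedness of $1/a$ on $[x(t),x(0)]$) are welcome details, but the core argument is identical to the paper's one-line derivation preceding the Corollary.
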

Finally, $z(t)$ is deduced as $z(t) = 1-x(t) - y(t).$
\subsection{Herd Immunity}

From Proposition 1, two cases can arise in the long run~: either $x(\infty) =0, y(\infty) =0, z(\infty) =1,$ or $x(\infty) >0, y(\infty)=0, z(\infty)<1.$ In the first case, all individuals will be infected and then immunized. In the second case, the infection stops when the (theoretical) proportion of immunized people is sufficiently large. Then $z(\infty)$ is the ratio of herd (or population) immunity and $x(\infty)$ is the final size of the susceptible population.
\\

To characterize the two cases, we have to consider the domain of existence of the integral equation (2.4) and more precisely the poles of the denominator of the integrand.
\\

$i)$ There might be at least one pole which is $v=0$, under the extra assumption that $1/a(x)$ is non integrable at zero. Indeed, in this case, within a neighbourhood of $v=0$, the primitive in (2.2) is equivalent to $\Int^v_{c} \Frac{du}{ca(u) \int_{x(0)}^v \Frac{du}{a(u)}} \simeq \Frac{1}{c} \log [\int_{x(0)}^v \Frac{du}{a(u)}],$ that is close to infinity.
\\

$ii)$ But the non integrability condition of $1/a(x)$ at 0 is not necessary, since the denominator in the integrand can be also equal to zero for some positive value $v^*$, which should necessarily be a root of the following equation~:

\begin{equation}
\label{theequation}
  v^* - x(0) - y(0) - c \Int^{v^*}_{x(0)} \Frac{du}{a(u)} = 0, \;\mbox{with}\; 0<v^* < x(0).
\end{equation}

%More precisely, we have:

%\begin{Corollary} If there exist solutions of eq. (2.6) strictly between 0 and $x(0)<1$, then $x (\infty) = v^*$, where $v^*$ is the largest such solution. In this case, we will say that the population will reach \textbf{herd immunity}. Moreover, in this case, the herd immunity is only reached when time $t$ goes to infinity, that is, the solution $x(t)$ decreases from $x(0)$ at $t=0$ to $v^*$ at infinity without ever reaching zero.  
	
%Otherwise, $x(\infty) = 0.$
%\end{Corollary}

%Since $x(0) > x(t) > x(\infty)$, this implies in particular that the solution $x(t)$ is always between $0$ and $1$, and in particular strictly positive if $x(0) > 0.$\vspace{1em}

It remains to derive a simple condition ensuring the existence of such a root $v^*$. We have:
 
\begin{prop} If $a(x)$ is increasing in $x$, and $x(0)>0, y(0)>0$, then
	\begin{enumerate}[$i)$]
		\item 
 there is one and only one solution $v^*$ to equation \eqref{theequation} lying strictly between 0 and $x(0)<1$, if and only if:
	\begin{equation}
	\label{herdimmunitycondition}
	c\int_0^{x(0)} \frac{\mathrm{d}u}{a(u)}> x(0) + y(0).
	\end{equation}
	In this case, the population will reach \textbf{herd immunity}, with $x (\infty) = v^*$.
	\item Moreover, the herd immunity is only reached when time $t$ goes to infinity, that is, the solution $x(t)$ decreases from $x(0)$ at $t=0$ to $v^*$ without ever reaching $v^*$ for finite $t$.  
	
\item If inequality \eqref{herdimmunitycondition} does not hold, 
then $x(t)$ goes to zero in finite time, that is, there exists $T_\infty$ such that $x(t)=0$ for all $t \geq T_\infty$, and, after this extinction time, $y(t)$ decreases at an exponential rate with: $\Frac{dy(t)}{dt}  = - c y(t)$. 
 	\end{enumerate}
 \end{prop}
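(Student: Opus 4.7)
The plan is to reduce everything to analysing the one-variable function
\[
F(v) := v - x(0) - y(0) - c \int_{x(0)}^{v} \frac{du}{a(u)}, \qquad v \in (0, x(0)],
\]
which is exactly the factor multiplying $a(v)$ in the denominator of the integrand of \eqref{integralequation}. Its zeros in $(0,x(0))$ are the candidates for $v^*$, and they control whether the integral defining $t$ blows up before $x(t)$ reaches $0$. The key analytic observations are that $F(x(0)) = -y(0) < 0$, that $F(v) \to c\int_0^{x(0)} du/a(u) - x(0) - y(0)$ as $v \to 0^+$ (possibly $+\infty$), and that $F'(v) = 1 - c/a(v)$ is nondecreasing because $a$ is increasing; hence $F$ is convex on $(0, x(0)]$.

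For part $i)$, I would split on the sign of $F(0^+)$. If \eqref{herdimmunitycondition} holds then $F(0^+) > 0 > F(x(0))$, and the intermediate value theorem delivers a root $v^* \in (0, x(0))$. Uniqueness follows from convexity plus the sign pattern at the endpoints: a convex function strictly negative at $x(0)$ cannot return to $0$ at a second interior point, since $F'$ being monotone means $F$ is either strictly decreasing on $(0,x(0))$ or first decreases then increases, and any second crossing would have to occur on the rising branch, which is already below zero. Conversely, if the condition fails I would combine the same convexity analysis with the elementary observation that $a \geq c$ on $(0,x(0))$ would already make $c\int_0^{x(0)}du/a(u) < x(0) \leq x(0)+y(0)$; this forces $F \leq 0$ throughout, ruling out any interior zero.

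For $ii)$, I would substitute the local expansion $F(v) = F'(v^*)(v-v^*) + o(v-v^*)$, with $F'(v^*) < 0$ since $v^*$ sits on the decreasing branch of $F$, into \eqref{integralequation}. The integrand then behaves like $1/[a(v^*) F'(v^*) (v - v^*)]$ near $v^*$, producing a logarithmic divergence: as $x(t) \downarrow v^*$, the right-hand side of \eqref{integralequation} tends to $+\infty$. Combined with the monotonicity of $x(t)$ from Proposition 1, this yields $x(\infty) = v^*$ while $x(t) > v^*$ at every finite time.

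For $iii)$, if \eqref{herdimmunitycondition} fails strictly, then by $i)$ the function $F$ has no zero in $(0, x(0))$, so the only candidate singularity of the integrand in \eqref{integralequation} sits at $v=0$ where $a(v)\to 0$. Failure of the condition also forces $1/a$ to be integrable at $0$, since otherwise $F(0^+) = +\infty$ and we would be back in case $i)$; thus $F(0)$ is finite and strictly negative, and near $v=0$ the integrand is comparable to $1/[a(v)\,|F(0)|]$, which is integrable in $v$. Hence $\int_{x(0)}^{0} dv/[a(v)F(v)]$ is finite and defines the extinction time $T_\infty$. Once $x(T_\infty)=0$, the first equation of (2.1) keeps $x$ pinned at $0$ because $a(0)=0$, the second reduces to $dy/dt = -cy$, and $y$ decays exponentially. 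The only delicate step I anticipate is the uniqueness argument in $i)$; once the qualitative shape of $F$ is understood, the divergence estimate in $ii)$ and the integrability estimate in $iii)$ are essentially automatic.
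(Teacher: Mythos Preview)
Your proposal is correct and follows essentially the same route as the paper: the paper defines the same auxiliary function (called $h$ there), uses $h(x(0))=-y(0)<0$ together with $h(0^+)=c\int_0^{x(0)}du/a(u)-x(0)-y(0)$ and the intermediate value theorem for existence, the monotonicity of $h'(v)=1-c/a(v)$ for uniqueness, the same first-order expansion near $v^*$ to obtain the $1/(v-v^*)$ divergence in $ii)$, and the integrability of $1/a$ at $0$ (forced by failure of the inequality) to bound the right-hand side of \eqref{integralequation} and extract the finite extinction time in $iii)$. One minor point: in your converse for $i)$ the aside about ``$a\ge c$ on $(0,x(0))$'' is extraneous---the clean argument is simply that convexity of $F$ together with $F(0^+)\le 0$ and $F(x(0))<0$ forces $F(v)\le 0$ on the whole interval, so no interior zero exists.
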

 \begin{proof}
 	See Appendix B.1.
 \end{proof}

Proposition 4 can be further simplified in the case of emerging diseases. More precisely:
 \begin{Corollary}
 	If, at the beginning of the epidemic, we have no recovered people and a small number of infections, that is, $x(0) + y(0) = 1, x(0) \sim 1$, then condition \eqref{herdimmunitycondition} becomes:
 	\begin{equation}
 	\label{alternativeherdcondition}
 	 c\int_0^{1} \frac{\mathrm{d}u}{a(u)}> 1.
 	\end{equation}
 	\end{Corollary}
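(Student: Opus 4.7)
The plan is to derive the corollary directly from condition \eqref{herdimmunitycondition} of Proposition 4 by substituting the emerging-disease initial conditions and passing to the limit $x(0) \to 1$.

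First, I would substitute $x(0) + y(0) = 1$ into the right-hand side of \eqref{herdimmunitycondition}, which immediately reduces it to
\[
c\int_0^{x(0)} \frac{\mathrm{d}u}{a(u)} > 1.
\]
So the only task that remains is to replace the upper bound $x(0)$ by $1$, justified by the assumption $x(0) \sim 1$.

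Second, I would argue by continuity of the integral with respect to its upper limit. Define $F(s) := \int_0^s \mathrm{d}u / a(u)$, taking values in $\mathbb{R}_+ \cup \{+\infty\}$. Two cases arise. If $1/a$ is integrable near $0$, then $F$ is finite-valued and continuous on $]0,1]$ by the fundamental theorem of calculus, so $F(x(0)) \to F(1)$ as $x(0) \to 1$, yielding the equivalence between \eqref{herdimmunitycondition} and \eqref{alternativeherdcondition} in the limiting regime. If instead $1/a$ fails to be integrable near $0$, then $F(s) = +\infty$ for every $s \in ]0,1]$, and both inequalities are trivially satisfied (both sides of the herd-immunity condition hold automatically), so the corollary remains valid in this degenerate subcase as well.

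There is essentially no obstacle: the whole argument is a one-line substitution followed by a continuity statement for the indefinite integral, with the only mild subtlety being the dichotomy on integrability of $1/a$ at the origin, which is harmless because the non-integrable case makes the condition vacuous. Because the corollary uses the informal notation $x(0) \sim 1$, I would state the result more precisely as: in the asymptotic regime $y(0) \downarrow 0$ with $x(0) = 1 - y(0)$, condition \eqref{herdimmunitycondition} converges to \eqref{alternativeherdcondition}.
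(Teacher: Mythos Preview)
Your proposal is correct and follows essentially the same approach as the paper: substitute $x(0)+y(0)=1$ into \eqref{herdimmunitycondition} and then let $x(0)\to 1$. The paper's own proof is a single line (``By tending $x(0)$ towards $1$, we get equivalently \ldots''), so your additional care with the continuity of $s\mapsto\int_0^s \mathrm{d}u/a(u)$ and the integrability dichotomy at $0$ simply makes explicit what the paper leaves implicit.
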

 
There are at least two important special cases under which condition \eqref{alternativeherdcondition} holds automatically, which are discussed below. 
\begin{rem}
If $a(1)<c$, then because of the increasingness of $a(\cdot)$, we have:
$$
c \int_0^{1} \frac{\mathrm{d}u}{a(u)}>  \frac{c}{a(1)} >1.
$$
Thus, if at the beginning of the epidemic the linearized version of model \eqref{linearized} is stable, then the herd immunity will necessarily be attained ultimately. 
 \end{rem}
 		
 \begin{rem}
 	If $\frac{1}{a(x)}$ is not integrable at $x=0$, \eqref{alternativeherdcondition} is also satisfied.
% If instead $a(1)>c$, then condition \eqref{alternativeherdcondition} may or may not be satisfied. Then another simple, sufficient but unnecessary condition for inequality \eqref{alternativeherdcondition}, or even \eqref{herdimmunitycondition} to hold is when $\frac{1}{a(x)}$ is not integrable at $x=0$, that is, the left hand side of \eqref{herdimmunitycondition} is equal to $+ \infty$. On the other hand, it is easily checked that condition \eqref{herdimmunitycondition} is satisfied for any value of $x(0)$ and $y(0)=1-x(0)$, if and only if $\frac{1}{a(x)}$ is not integrable is non integrable at zero. Note also that in this case, even though 0 remains a pole of the denominator of the integrand in equation (2.4), since another pole $v^*$ exists and is larger than 0, the state 0 will not be reached or approached by $x(t)$ when $t$ goes to infinity. 
 \end{rem}
\textbf{Example 1 (cont.)~:} If $a(x) = \alpha x$, the condition for herd immunity is automatically satisfied since $\frac{1}{x}$ is non integrable at zero. To look for the herd immunity level $x(\infty)$, we rewrite equation \eqref{theequation} into:
$$
\begin{array}{llcl}
 & x(\infty) & - & [x(0) + y(0)] = \Frac{c}{\alpha} \log [x(\infty)/x(0)] \\ \\
 \Longleftrightarrow & x(\infty) &=& x(0) \exp \{ - \Frac{\alpha}{c} [x(0) + y(0) - x(\infty)]\}.
\end{array}
$$

This equation depends on parameters $\alpha, c,$ through the ratio $R_0 = \alpha/c$, called the (initial) reproductive number (or reproduction number). Then the formula for $x(\infty)$ becomes~:

\begin{equation}
\label{herdimmunitybasicsir}
x(\infty) = \exp \{ - R_0 [1-x (\infty)]\},
\end{equation}
if $x(0) \approx 1$. 
This relation has been first derived in Kermack, McKendrick (1927) [see also Ma, Earn (2006)]. The fact that the ultimate size $x(\infty)$ depends only on the reproductive number $R_0$ means that the benchmark SIR model is very constrained. Indeed, $R_0$ has an interpretation close to $t=0$, by considering the stability condition of the linearized model [see equation \eqref{linearized}]. In other words, in the standard SIR model, the long run properties of the dynamic are characterized by its short run properties only. This is no longer the case for other transmission functions $a$.
\\
[1em]
%\begin{rem} The standard SIR model in Example 1 is very constrained. Indeed, the final size $x(\infty)$ depends on the reproductive number  $R_0$ only, which has an interpretation close to $t=0$, by considering the linearized version. In other words, in the standard SIR model, the long run properties of the dynamic are characterized by its short run properties only, which is a very restrictive consequence of this model. This is no longer the case for other choices of transmission function $a$.\end{rem}
%\begin{rem}
\textbf{Remark 3 (cont.)} \textit{A sufficient condition for $\frac{1}{a(x)}$ to be non integrable at $x=0$ is that $\frac{a(x)}{x}$ is nondecreasing in $x$, that is the case of mover-stayer phenomenon. Indeed, in this case we have:
$$
\frac{1}{a(x)} \geq \frac{1}{a(1)x}, \qquad \forall x \in ]0, 1[,
$$
which is therefore non integrable. }
\vspace{1em}
%\end{rem}

%\setcounter{rem}{4}

\textbf{Example 2 (cont.)~:} If $a(x) = \alpha x^\beta, \beta>1,$ we have~:

$$
x(\infty) - [x(0) + y(0)] = \Frac{c}{\alpha (1-\beta)} [x(\infty)^{1-\beta} - x(0)^{1-\beta}].
$$

At the beginning of the epidemic, if $x(0) \approx 1$, this equation becomes~:

\begin{equation}
\label{herdimmunitypowerfunction}
1-x(\infty)^{1-\beta} =  R_0(1-\beta) [1-x(\infty)].
\end{equation}
Thus the level of herd immunity $x(\infty)$, if it exists, depends on both parameters $R_0$ and $\beta$, not on $R_0$ alone. By Proposition 4 and Corollary 2, equation \eqref{herdimmunitypowerfunction} admits a unique solution over $]0, 1[$, if and only if:
\begin{equation}
\label{conditionherdimmunitypowersir}
\beta> 1- \frac{c}{a}=1-\frac{1}{R_0}.
\end{equation}
This condition involves both $R_0$ and $\beta$, and there are several special cases that deserve mentioning.
\begin{enumerate}[$i)$]
 \item If $\beta>1$, inequality \eqref{conditionherdimmunitypowersir} is satisfied for any $R_0>0$. This is the sufficient, but unnecessary condition, analyzed in Remark 3. 
\item If $R_0<1$, then again inequality \eqref{conditionherdimmunitypowersir} holds for any $\beta>0$. This is a consequence of Remark 2 above. 
\item If $R_0 \geq 1$, then the admissible domain of $\beta$ to ensure herd immunity is $]1-\frac{1}{R_0}, \infty[$.
\end{enumerate}
\vspace{0.5em}

\begin{rem}
	 In Example 2 above, we have retained the definition $R_0=a/c$. It should be emphasized that this is for expository purpose only, and that this ratio in this model no longer has the interpretation of reproduction number, that is the expected number of secondary infections due to an individual infected at a given date $t$. This differs from  the baseline SIR model of Example 1, in which $\alpha$ is the intensity of generating secondary infection and the duration of the infectious period is exponential with parameter $c$ and hence $R_0=a/c$ is the reproduction number. In Example 2, on the other hand, since the intensity of generating secondary infection is no longer constant, the reproduction number is also time varying (see Section 2.5). However, when $x(0)$ is close to 1, $R_0=a/c$ can still be viewed as the \textit{initial} reproduction number at $t=0$. 
\end{rem}

%\textbf{Yang: is it better to move the definition $R_0=a/c$ immediate after the linearized ODE, at the end of section 2.1? }

\begin{rem}It is possible to derive directly equation \eqref{theequation} from equation \eqref{fromytox} in Corollary 1, without looking for the closed form of $x(t)$. Indeed, by applying \eqref{fromytox} for $t=\infty$ and taking into account the fact that $y(\infty) = 0$, we get:
	$$
	-x (\infty) + 1 + c \Int^{x(\infty)}_{1} \Frac{du}{a(u)} = 0,
	$$
	that is equation \eqref{theequation}. This is the standard approach used to derive the final size in practice [see e.g. Anderson, May (1991), Ma, Earn (2006), Section 2, Miller (2012), p2126]. This approach provides the necessary form of the final size, but does not prove the convergence of $x(t)$ to this value, when $t$ tends to infinity. This convergence is a consequence of Proposition 4 and Corollaries 1, 2 under herd immunity. It can also be derived indirectly by using Lyapunov functions [Korobeinikov, Wake (2002), O'Regan et al. (2010)].
\end{rem} 

\subsection{Stability of final size $x(\infty)$ with respect to $x(0)$}
Throughout this subsection we assume that $x(0)+y(0)=1$, and condition \eqref{alternativeherdcondition} is satisfied, so that the existence of herd immunity is guaranteed for $x(0)$ sufficiently close to 1. In this section we study the stability of the ultimate size $x(\infty)$ with respect to the initial state of the population-at-risk, characterized by $x(0)$.   %and if $x(0)<1$, then $x(\infty)>0$. %On the other hand, in the limiting case $x(0)=1$, equation \eqref{herdimmunitycondition} has another trivial solution that is $v=1$. 
We use the notation $x(0)=1-\epsilon$, where $\epsilon$ is typically small, and denote the final size by $x_\epsilon(\infty)$ to emphasize its dependence on $\epsilon$. Since the linearization analysis [see equation \eqref{linearized}] shows that the dynamics of the system is quite different depending on the sign of $\frac{a(1)}{c}-1$, we expect also two different regimes for $x_\epsilon(\infty)$ when $\epsilon$ goes to zero. %Specifically, we have the following proposition: 

%Therefore, one natural question is: {what is the behavior of $x_\epsilon(\infty)$, when $\epsilon:=1-x(0)$ goes to zero?\footnote{Here we use $x_\epsilon(\infty)$ to emphasize the dependence of $x(\infty)$} on $\epsilon$.} We have the following result:

\begin{prop}
\label{stabilityprop}
	\begin{itemize}
		\item If $a(1)>c$, $x_\epsilon(\infty)$ converges to a constant $x_0(\infty) \in ]0, 1[$, when $\epsilon$ goes to zero. Moreover, $x_0(\infty)$ is the unique root over $]0, 1[$ of equation:
		$$v-1-c \int_{1}^v \frac{\mathrm{d}u}{a(u)}=0,$$
		and satisfies $x_0(\infty)< a^{-1}(c)<1.$ 
		\item If $a(1) \leq c$, then $x_\epsilon(\infty)$ converges to 1, when $\epsilon$ goes to zero. %Roughly speaking, this corresponds to the case where the epidemic extincts very quickly, with $x(\infty)$ very close to 1 if $\epsilon$ is sufficiently small.
	\end{itemize}
\end{prop}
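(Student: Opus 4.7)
My approach is to reduce the claim to a one-dimensional perturbation analysis of a fixed-point equation. Specializing the final-size equation (2.7) to $x(0) = 1-\epsilon$ and $y(0) = \epsilon$, one sees that $x_\epsilon(\infty)$ is the unique $v \in (0, 1-\epsilon)$ solving $F(v) = \eta_\epsilon$, where
\[
F(v) := v - 1 - c\int_1^v \frac{du}{a(u)}, \qquad \eta_\epsilon := c\int_{1-\epsilon}^1 \frac{du}{a(u)}.
\]
The function $F$ does not depend on $\epsilon$, and $\eta_\epsilon \to 0^+$ as $\epsilon \to 0$ since $a$ is positive and continuous at $v=1$. The two bullets of the proposition then follow from a shape analysis of $F$ combined with continuity of its inverse on the appropriate branch.

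The first substantive step is to analyze $F$ itself. I would compute $F'(v) = 1 - c/a(v)$, whose sign equals that of $a(v) - c$ by monotonicity of $a$; note also that $F(1) = 0$ and $F(0^+) = -1 + c\int_0^1 du/a(u) > 0$ by hypothesis \eqref{alternativeherdcondition}. If $a(1) > c$, $F$ decreases on $(0, a^{-1}(c))$ from a positive value to a strictly negative minimum at $v = a^{-1}(c)$, then increases back to $F(1) = 0$. The intermediate value theorem produces a unique zero $x_0(\infty)$ of $F$ in $(0, a^{-1}(c))$, which already yields both the algebraic characterization and the bound $x_0(\infty) < a^{-1}(c) < 1$ stated in the proposition. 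If $a(1) \leq c$, monotonicity of $a$ forces $a(v) \leq c$ on $(0, 1]$, hence $F'(v) \leq 0$, and $F$ is a strictly decreasing continuous bijection from $(0, 1]$ onto $[0, F(0^+))$.

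Finally I would transfer these facts to the perturbed equation $F(v) = \eta_\epsilon$. In the supercritical case $a(1)>c$, $F$ is strictly monotone near $x_0(\infty)$ with $F'(x_0(\infty)) = 1 - c/a(x_0(\infty)) < 0$, so the implicit function theorem (or mere continuity of $F^{-1}$) gives $x_\epsilon(\infty) = F^{-1}(\eta_\epsilon) \to x_0(\infty)$. In the subcritical case $a(1)\leq c$, strict monotonicity on $(0, 1]$ gives $F^{-1}(\eta_\epsilon) \to F^{-1}(0^+) = 1$. I expect the main obstacle to be the branch-selection issue in the supercritical case: a priori the level set $\{F = \eta_\epsilon\}$ could contain points on the increasing branch near $v = 1$, which would spoil the convergence to $x_0(\infty)$. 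This is settled by the observation that the increasing branch $F|_{(a^{-1}(c), 1)}$ takes values in $(F(a^{-1}(c)), 0) \subset (-\infty, 0]$, hence cannot meet the positive level $\eta_\epsilon$; combined with the constraint $v^* < 1-\epsilon$ from (2.7), this forces $x_\epsilon(\infty)$ to lie on the decreasing branch near $x_0(\infty)$ for all small $\epsilon$.
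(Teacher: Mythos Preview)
Your proof is correct and is essentially the paper's own argument: your function $F$ is exactly the paper's $h_0$, your equation $F(v)=\eta_\epsilon$ is a trivial rearrangement of $h_\epsilon(v)=0$, and your branch-selection observation is equivalent to the paper's inequality $v_\epsilon<a^{-1}(c)$. The only cosmetic difference is that the paper passes to the limit via monotonicity of $\epsilon\mapsto v_\epsilon$ together with joint continuity of $(\epsilon,v)\mapsto h_\epsilon(v)$, whereas you invoke continuity of $F^{-1}$ on the decreasing branch directly; both yield the same conclusion.
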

\begin{proof}
	See Appendix B.2.
\end{proof}

This result is an enhanced version of Proposition 2. Roughly speaking, if $a(1)>c$, the initial state is unstable, and the epidemic explodes at the beginning with $y(t)$ increasing for small $t$, and the final herd immunity corresponds to the state where ``quite a lot of" individuals are infected and then immunized. Proposition 5 above says that this final state $x_\epsilon(\infty)$ is a continuous function of $\epsilon$. If, on the other hand $a(1) \leq c$, then the initial state is stable [or attractive, see Boatto et al. (2018)]. 
\vspace{1em}

\textbf{Example 1 (cont.) :} If $a(x) = \alpha x$, then $\frac{a(1)}{c}=\frac{\alpha}{c}=R_0$, which determines the stability (or the lack thereof) of the initial state, $x(0) \approx 1, y(0)=1-x(0)$. This result is well known for the basic SIR model [see e.g. Boatto et al. (2018), Thm 1, and references therein].
\\
%the condition for herd immunity at the beginning of the epidemic is~:$R_0 = \Frac{\alpha}{c} > 1,$
%with $x(\infty)=\frac{1}{R_0}$ [see Hens et al.(2020) eq (3.1)].

%This condition can be compared with the condition of explosion at the beginning of the epidemics and understood from the change of time used in deriving the closed form expression of the solution (see appendix). The calendar time has no epidemiological interpretation and has to be replaced by the epidemiological intrinsic time $x(t)$ that accounts for the speed of transmission. %When $R_0$ is large, the speed of transmission is higher, making the endemic equilibrium value reached earlier in intrinsic time, then without all individuals being touched.

\textbf{Example 2 (cont.)~:} If $a(x) = \alpha x^\beta$, we have again  $\frac{a(1)}{c}=\frac{\alpha}{c}=R_0$. 
Then we can refine the discussion below equation \eqref{conditionherdimmunitypowersir} as follows:
\begin{itemize}
\item If $R_0 \leq 1$, the final size $x_\epsilon(\infty)$ is close to 1 for small $\epsilon$.
\item Otherwise, if $R_0>1$, then depending on the value of $\beta$ we have three cases:
\begin{itemize}
		\item if $\beta \geq 1$, the hazard function $\frac{a[x(t)]}{x(t)}=[x(t)]^{\beta-1}$ is decreasing in time. Then there is herd immunity, with $x_\epsilon(\infty)$ away from 1. 
	\item if $\beta \in ]1- \frac{1}{R_0}, 1[$, the hazard function is increasing in time, but at a mild rate in $x(t)$. Then there is herd immunity, with $x_\epsilon(\infty)$ away from 1. 
		\item if $\beta \leq 1-\frac{1}{R_0}$, the rate at which the hazard function increases in $x$ is too large, the herd immunity is not attained and we have total infection $x_\epsilon(\infty)=0$ in finite time. 
\end{itemize} 
In other words, if $R_0>1$, there is an abrupt transition from the case $x_\epsilon(\infty)\approx 1$ to the case $x_\epsilon(\infty)=0$ around the point $\beta=1-\frac{1}{R_0}$. %In this latter limiting case, the condition of herd immunity is not satisfied
\end{itemize}
In Proposition \ref{stabilityprop}, with $x(0)=1-\epsilon$ and $c$ fixed, there may exist a discontinuity when $c=a(1)$. Let us now consider the small $\epsilon$ behavior when $x(0)=1-\epsilon$ and $c_\epsilon\rightarrow a(1)^{+}$, or $a(1)^{-}$. 
%Thus for this model, the fact that $R_0=1$ is a limiting case in Proposition 4 echoes the three different cases discussed for equation \eqref{conditionherdimmunitypowersir}.  
%the condition for herd immunity is the same : $R_0 = \Frac{\alpha}{c} > 1$. 
%Finally, note that in the above Proposition, the limiting case $a(1)=c$ corresponds to a herd immunity that is close to zero, therefore, by mimicking the same proof, it is straightforward to deduce that 

%	Let us now assume function $a(\cdot)$ given, and  let $\epsilon$ go to zero, and $c$ go to $a(1)$ simultaneously. We have two cases
\begin{Corollary}
	For a given function $a(\cdot)$, if
	\begin{itemize}
		\item either initially $c_\epsilon<a(1)$ and $c_\epsilon$ increases towards $a(1)^{-}$ when $\epsilon$ goes to zero, 
				\item or initially $c_\epsilon>a(1)$ and $c_\epsilon$ decreases towards $a(1)^{+}$ when $\epsilon$ goes to zero, 
	\end{itemize}
then $x_{\epsilon, c}(\infty)$ goes towards $1$. 
\end{Corollary}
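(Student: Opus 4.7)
The plan is to work with the implicit equation characterizing the final size. Setting $v^*_\epsilon := x_{\epsilon, c_\epsilon}(\infty)$ and specialising \eqref{theequation} to $x(0)=1-\epsilon$, $y(0)=\epsilon$, $c=c_\epsilon$, the final size satisfies
$$
1 - v^*_\epsilon \;=\; c_\epsilon \int_{v^*_\epsilon}^{1-\epsilon} \frac{\mathrm{d}u}{a(u)}.
$$
First I would verify that such a root $v^*_\epsilon \in (0, 1-\epsilon)$ exists in both regimes for $\epsilon$ sufficiently small. In Case 2 this is immediate from Remark 2. In Case 1, since $c_\epsilon \to a(1)^-$ and, by the strict monotonicity of $a$ together with $a(0)=0$, one has $a(1)\int_0^1 \mathrm{d}u/a(u) > 1$ strictly, so the herd-immunity condition \eqref{herdimmunitycondition} holds for all small $\epsilon$.

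Next I would argue by compactness: any sequence $\epsilon_n \downarrow 0$ admits a subsequence along which $v^*_{\epsilon_n}$ converges to some $v_0 \in [0, 1]$. The case $v_0 = 0$ is excluded by inspection of the above equation: if $v^*_{\epsilon_n} \to 0$ then its right-hand side tends either to $a(1)\int_0^1 \mathrm{d}u/a(u) > 1$ (if $1/a$ is integrable at $0$) or to $+\infty$ (otherwise), contradicting the left-hand side tending to $1$. Since $v_0 > 0$, the integrand $1/a$ is bounded on $[v_0/2,1]$ and dominated convergence yields the limiting identity
$$
1 - v_0 \;=\; a(1) \int_{v_0}^1 \frac{\mathrm{d}u}{a(u)}.
$$

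To finish, I would show this limiting equation forces $v_0 = 1$. Let $h(v) := 1-v - a(1)\int_v^1 \mathrm{d}u/a(u)$; then $h(1)=0$ and $h'(v) = -1 + a(1)/a(v) \geq 0$ on $(0,1]$ by monotonicity, with strict positivity wherever $a(v)<a(1)$, so that under the (strict) monotonicity of $a$ we have $h(v) < 0$ for every $v<1$. Consequently every convergent subsequence of $(v^*_\epsilon)$ must tend to $1$, proving $x_{\epsilon, c_\epsilon}(\infty) \to 1$. I expect the main technical hurdle to be the exclusion $v_0 \neq 0$: the two subcases (integrable and non-integrable $1/a$ at zero) must be handled separately, but both yield the same strict contradiction against the left-hand side tending to $1$.
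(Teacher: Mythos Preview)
Your argument is correct and proceeds along the same lines as the paper's proof of Proposition~5 in Appendix~B.2, from which this corollary is stated without a separate proof. The paper, working with fixed $c$, uses monotonicity of $v_\epsilon$ in $\epsilon$ to extract a limit and then passes to the limit in the implicit equation $h_\epsilon(v_\epsilon)=0$; you replace the monotonicity step by a compactness/subsequence argument, which is the natural adaptation once $c_\epsilon$ is allowed to vary jointly with $\epsilon$ and monotonicity of $v_\epsilon$ is no longer evident. Both routes meet at the same decisive point: the limiting equation $1-v_0 = a(1)\int_{v_0}^{1}\mathrm{d}u/a(u)$ has $v_0=1$ as its only root in $(0,1]$, which is exactly the observation behind the second bullet of Proposition~5 when $c=a(1)$. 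Your explicit exclusion of $v_0=0$ via the strict inequality $a(1)\int_0^1 \mathrm{d}u/a(u)>1$, and your unified handling of the two regimes, make the joint-limit statement self-contained rather than appealing to an iterated limit; the strict monotonicity of $a$ that you invoke is indeed what is needed both for this step and for the concluding inequality $h(v)<0$ on $(0,1)$.
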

This result means that even if at the beginning the epidemic is explosive but mildly,  the epidemic can still be well controlled (in the sense that the total infected proportion $1-x(\infty)$ is small), so long as the initial infections proportion $y(0)=\epsilon$ is 
 small enough. %If, instead the epidemic is non explosive at the beginning, then ev

\subsection{The reproduction number(s)}
There are two definitions of reproductive number in the epidemiological literature: the basic one and the effective one. These notions are heavily used in empirical studies as measures to track and control the evolution of various epidemics over time. However, their use is not without criticism [see e.g. Li et al. (2011)]. In this subsection we first give the formal definition of these numbers and compute them for the SPSIR model. Then we show that the aforementioned Examples 1 and 2 are particularly instructive to discuss the limits of relying solely on these measures for disease tracking, especially after the initial outbreak period. %In particular, we will see that this analysis is particularly instructive, if we focus on the aforementioned Example 2.
\vspace{1em}

\textbf{Definition 1.} (see e.g. Farrington and Whithaker, 2003) The effective reproductive number, denoted $R_e(t)$, at any time $t$ of the epidemic, is the expected number of secondary infections caused by one typical new infection.
\vspace{1em}

Following this definition, we have:
\begin{equation}
\label{definition}
R_e(t)=\int_0^\infty a[x(t+u)] \exp(-cu) \mathrm{d}u,
\end{equation}
where $\exp(-cu)$ is the probability that an infected individual stay infectious after a duration $u$, and $a[x(t+u)]$ is the instantaneous number of infections generated by each new infectious individual. Indeed, with a constant rate $c$ of recovery, the infection duration has an exponential distribution with parameter $c$ and survival function $\exp(-cu)$.

If $c$ is large such that with large probability, the duration of the infectious period of an infected individual is small, then we have:
\begin{equation}
\label{linearised}
R_e(t)\approx \int_0^\infty a[x(t)] \exp(-cu) \mathrm{d}u=\frac{a[x(t)]}{c}.
\end{equation}
We immediately note that by definition, $\frac{\mathrm{d}y(t)}{\mathrm{d}t}>0$ if and only if $\frac{a[x(t)]}{c}>1$. This property, which is often interpreted as the analog of Proposition 2 for positive $t$, is the key motivation in the literature of defining $R_e(t)$. For instance, as Farrington and Whithaker (2003) put it, ``if $R_e(t) \leq 1$ , then, while infections still occur, for example by limited spread from imported cases, they cannot result in large epidemics". It is shown later on that this analog is not appropriate and should be used carefully.  

\begin{rem}
	The approximation used in equation \eqref{linearised} is based on the assumption that $c$ is large. This assumption is not necessarily realistic, since in practice, when most epidemics start, we have $R_e(0)=\frac{a(1)}{c}>1$, which means $c$ is small compared to $a(1)$.
	
	However, if function $a(\cdot)$ is differentiable, then it is possible to conduct a Taylor's expansion of equation \eqref{definition}. We have:
	\begin{align}
	R_e(t)&\approx  \frac{a[x(t)]}{c}    +\frac{a'[x(t)]x'(t)}{c^2} \nonumber \\
	&= \frac{a[x(t)]}{c} \Big[ 1-\frac{a'[x(t)]y(t)}{c}\Big].
	\end{align}
\end{rem}
The second term in the expansion depends not only on the current size of the susceptible population $x(t)$, but also on the current size of the infective population $y(t)$. At $t=0$, with $y(0) \approx 0$, the second time becomes negligible. %Note also that if $a(\cdot)$ is not differentiable, then intuitively the approximation in \eqref{linearised} is very poor. This is for instance the case if $a(\cdot)$ is replaced by a stochastic process $A(\cdot)$ that is nowhere differentiable. (pas a la bonne place)
\vspace{1em}

\textbf{Definition 2.} The basic reproduction number, denoted $R_0(t)$, is defined as:
\begin{equation}
	\label{effective}
R_0(t)= \frac{R_e(t)}{x(t)}.
\end{equation}
Hence, under approximation \eqref{linearised}, we have $R_0(t)=\frac{1}{c}\frac{a[x(t)]}{x(t)}$. Thus we recover the ratio $\frac{a[x(t)]}{x(t)}$, which is the instantaneous hazard function discussed in Section 2.1. In particular, if at $t=0$ with $x(0) \approx 1$, we have: $R_0(0)=R_e(0)$, which was denoted $R_0$ in Examples 1 and 2 above.  

 $R_0(t)$ is obtained from $R_e(t)$ by accounting for the diminishing susceptible population over time. The choice of the normalization term $x(t)$ can be motivated by the fact that the instantaneous number of infections $a[x(t)]$ is decomposed into the product of $\frac{a[x(t)]}{x(t)}$, that is the \textit{instantaneous number of contacts (including with those that are no longer susceptible)}, times $x(t)$, which is the probability that a contact is still susceptible (and hence will be infected almost surely upon contact).\footnote{If the size of the country is fixed, there is automatically a diminution of the density of individuals at risk. In some sense, there is an automatic social distancing between the susceptible and infectious individuals.} Hence $\frac{a[x(t)]}{x(t)}$ is a measure that is adjusted for the current size $x(t)$ of the population-at-risk.
%$R_0(t)$ is the expected proportion of susceptible people that will be infected by an infectious individual, and is therefore a measure that is free of the current size $x(t)$. 
In particular, in the baseline SIR model, the instantaneous number of contacts of an infectious individual $\frac{a[x(t)]}{x(t)}= \alpha$ is constant. %The motivation of using $x(t)$ as the normalization term is the following: in this baseline SIR model, the average number of contacts over the infectious period of a patient is $\alpha/c$, but at time $t$, among all the contacts, only a fraction $x(t)$ are susceptible. 
\vspace{1em}

Thus, depending on the problem, one of the two reproductive numbers might be more suitable. If we want to predict the number of infectious individuals, which is an important indicator of hospitalization or ICU needs, say, then $R_e(t)$ is more suitable; if instead one would like to measure the infectivity of the infectious population, then we have to account for the fact that only a fraction (in fact $x(t)$) of their contacts can lead to infections and hence $R(t)$ is more adapted.
\vspace{1em}

Next, let us look at an example in which both reproductive numbers should be interpreted with care, especially when the epidemic has already reached an advanced stage.  

%The $R_0(t)$ and $R_e(t)$ are easier to interpret around zero (due to the linearized ODE system). For positive time $t$, we have seen that $R_0(t)$ is constant so is also easy to interpret for the baseline SIR model. 
%Having provided the formal definitions of the effectivea and basic reproduction number, let us now look at their properties at a positive time $t$, for a non-baseline SPSIR model. For expository purpose, let us consider Example 2, for instance, with $a(x)=\alpha x^\beta$. 
\vspace{1em}

\textbf{Example 2 (cont.)} At time $t>0$, we have, under approximation \eqref{linearised}:
\begin{enumerate}[$i)$]
	\item $R_e(t)=\frac{\alpha}{c} [x(t)]^{\beta}$. This quantity goes to zero if $x(t)$ goes to zero, so long as $\beta>0$. In Section 2.3 we have seen that, if $\beta<1-1/R_0(0)$, then the final size goes to zero in finite time. But, even if at a time $t$ the current $x(t)$ is such that $R_e(t)<1$, it does not guarantee a  ``slow down" of the epidemic progression, nor the existence of an eventual herd immunity. %does not necessarily guarantee the existence of herd immunity outside zero. 
	Intuitively, even though $y(t)$ decreases for $x(t)$ sufficiently small, the measure $R_e(t)$ fails to account for the equally fast decrease rate of $x(t)$ when $x(t)$ is near 0.  This is very different from what arises at the initial time $t=0$, since we have shown that for any SPSIR model with arbitrary function $a(\cdot)$, there is herd immunity, with a final size $x(\infty)>0$ whenever $R_0(0)<1$. 
	\item The interpretation of $R_0(t)=\frac{\alpha}{c} [x(t)]^{\beta-1}$ is not straightforward, neither. Depending on $\beta$, we have two cases:
	\begin{itemize}
		\item If $\beta>1$, there is herd immunity, and $R_0(t)$ goes to zero if $x(t)$ goes to zero. This is the ``well-behaved" case.
		\item If $\beta$ is between $1-1/R_0(0)$ and $1$, then $R_0(t)$ goes to infinity if $x(t)$ can go to zero. However, we know that in this case there is herd immunity with a final size $x(\infty)>0$. Therefore, the statistic $R_0(t)$ alone does not allow us to determine the explosiveness/convergence of the system. 
	\end{itemize}
\end{enumerate}
 
	The above ``paradox", that is the failure of $R_e(t)$ and $R_0(t)$ to characterize the explosiveness/convergence of the system for positive $t$ echoes similar criticism on $R_0=R_0(0)$ in the literature [see e.g. Tildesley and Keeling (2009), Li et al. (2011)], although they concern different epidemiological models. In particular, the failure is not due to the approximation error in \eqref{linearised} alone, but also due to the Definition 1 of these numbers, or equivalently to equation \eqref{definition}. Intuitively, what counts the most is not the average number of secondary infections \textit{per} infectious individual, but rather the average number of new infections due to \textit{all} existing infectious individuals [which we denote by $R_a(t)$], as well as its normalized version $R_a(t)/x(t)$ to account for the diminishing size of the susceptible population. %This difference is typically not an issue 
	At the beginning of the epidemic, when $y(t)$ is close to zero, the probability that   a given susceptible individual has contact with more than two infectious individuals is negligible and thus each new infection can be attributed to one existing infectious case. Then $R_a(t)$ is roughly equal to $R_e(t) y(t)$ (up to normalization by the total population size $N$). Then $R_e(t)$ is a convenient, scale-free measure of the short-term evolution of $y(t)$, and hence of $R_a(t)$. When $t$ moves away from 0 and $y(t)$ becomes large and/or $x(t)$ becomes small, contact tracing becomes more and more difficult since each susceptible individual can have contact with many infectious cases and it is no longer possible to attribute a new infection to one specific infectious case only. In other words, $R_a(t)$ is no longer close to $y(t)R_e(t)$. Hence, even   $R_e(t)$ no longer characterises the future dynamics of $R_a(t)$ well. %A second possible explanation is through the lens of the information set. For positive $t$, the future evolution of the system is completely characterized by the pair $(x(t), y(t))$, but not by $R_e(t)$ alone, which only depends on $x(t)$. Hence focusing on these reproduction numbers alone is inappropriate since it leads to significant information loss. %Finally, when $x(t)$ is significantly smaller than 1, even $R_a(t)$ becomes inappropriate since it fails to account for the diminishing size and normalizing it by $x(t)$, as is done in Definition 2, seems more appropriate. 
	
%	Therefore, because of the large number of competing risks, $R_e(t)$ becomes a rather poor measure, especially when $x(t)$ is small compared to $y(t)$. 
 
 Finally, note that these notions differ significantly from the so-called instantaneous reproductive number introduced in Fraser (2007) to get a simple computation of a reproductive number. This difficult notion has been popularized by Cori et al. (2013), but is misleading, since it does not account for the nonlinear features existing in a SIR model [see the discussion in Eliott et al. (2020)].

\section{Stochastic Transmission}
\setcounter{equation}{0}\def\theequation{3.\arabic{equation}}
This section extends the model of Section 2 by allowing the transmission function $a(\cdot)$ to be stochastic. 
\subsection{A stylized fact}

When the standard SIR model is used, and $x(0) \sim 1$, with $y(0) \sim 0,$ equation (2.5) reduces to~:

$$
y(t) = 1-x(t) + \Frac{c}{\alpha} \log x(t),
$$

\noindent or equivalently~:

\begin{equation}
  1-x(t) - y(t) = z(t) = -(1/R_0) \log x(t).
\end{equation}

It is usual in practice to deduce from observed frequencies $\hat{x} (t), \hat{z}(t),$ estimators of the reproductive number day per day as for instance~:

\begin{equation}
\label{OLS}
  \hat{R}_{0,t} = -\log \hat{x}(t)/\hat{z}(t),
\end{equation}
or more generally by applying ordinary least square to the approximated regression model $\hat{z}(t)=-\frac{1}{R_0}\log \hat{x}(t) $ in a rolling way.\footnote{Formula \eqref{OLS} corresponds to this practice with a rolling window of one day.} Then, it is generally observed that these estimated values are very erratic over time. This "volatility" can be explained in different ways as follows~:
\vspace{1em}

i) There is an effect due to the replacement of the theoretical probabilities by the observed frequencies, even if the cross-sectional dimension, that is the size of the population, is large. Indeed, at least at the beginning of the epidemic, $z(t),$ and also $\hat{z}(t),$ are small, and this effect is amplified on $\hat{R}_{0t}$ by the nonlinear transformation (3.2). {This is called the \textit{demographic stochasticity} in the literature [see e.g. Breto et al. (2009)].} \vspace{1em}

ii) An alternative explanation is that the parameters in the SIR model are time dependent, especially due to endogenous health policies of lockdown, social distancing, or management of the number of intensive care unit beds. \vspace{1em}

iii) The transmission function can also be misspecified. For instance, under the SPSIR model, we get from (2.5)~:
$$
z(t) = -c \Int^{x(t)}_1  \Frac{du}{a(u)},
$$
instead of $z(t) = -(1/R_0) \log x(t).$ Therefore, $\hat{R}_{0t}$ approximates $\tilde{R}_{0t} = \log x(t) / c \Int^{x(t)}_1 \Frac{du}{a(u)}$, instead of $R_0$. This quantity depends on time $t$, due to the nonlinear transmission function $a$. Intuitively, there is a one-to-one relationship between $\tilde{R}_{0t}, 0 < t<T$, and function $a$ for $x$ between 1 and $x(T)$. The series of $\tilde{R}_{0t}, 0 < t<T$, can be seen as a functional summary statistics providing consistent information on a segment of the transmission function. But this quantity differs from the true reproductive numbers $R_0(t)$ or $R_e(t)$ introduced in Section 2.5 [see eq. (2.14)]. \vspace{1em}

iv) This observed "volatility" can also indicate that the transmission function itself is uncertain, i.e. stochastic. {Thus we have the so-called \textit{environmental stochasticity} on top of the demographic (i.e. observational) stochasticity. It is widely accepted in the epidemiological literature that the demographic stochasticity alone is often insufficient to explain the deviations of the data from their predictions given by deterministic models [see e.g. Breto et al. (2009)]. }This feature is introduced below by considering SIR model with stochastic transmission.

\subsection{SIR model with stochastic transmission}

A SIR model with stochastic transmission is defined by the stochastic dynamic system~:

\begin{equation}
  \left\{
  \begin{array}{lcl}
  \Frac{d X(t)}{dt} & = & - A [X(t);\omega] Y(t), \\ \\
  \Frac{dY(t)}{dt} & = & A [X(t); \omega] Y(t) - c Y(t), \\ \\
  \Frac{dZ(t)}{dt} & = & c Y (t),
  \end{array}
  \right.
\end{equation}
\noindent where the transmission function $a(.) = A(.;\omega)$ depends on a state of nature $\omega$. Then the solutions $X(t), Y(t),Z(t)$ become also random. The stochastic transmission function $A(x;\omega) \equiv A(x), x \in [0,1],$ is a stochastic process indexed by the theoretical proportion of susceptible.\vspace{1em}

\begin{Ass} 
	\label{stochastictransmission}
	$i)$ $A(0) = 0$.
	 $ii)$ Process $A(\cdot)$ is non degenerate, that is non deterministic, and almost surely (a.s.) strictly increasing on $[0,1]$.
\end{Ass}

Under Assumption \ref{stochastictransmission}, the results in Section 2 can be applied for (a.s.) any state of nature $\omega$. For instance (2.2) can be applied with stochastic $X$ and $A$ instead of scalar small $x$ and $a$.

Similarly the condition in Corollary 3 on the comparison to 1 of a stochastic latent variable $A(1)/c$ implies now an uncertainty on the existence of herd immunity. Moreover, when it exists, the level of herd immunity itself is random. In other words, the final size $X(\infty)$ is random with a distribution, which is a mixture of a point mass at zero, and a continuous distribution on $(0,1)$. The size and time of the peak in the proportion of infected people are stochastic as well.\vspace{1em}

The introduction of uncertainty through a stochastic transmission is an alternative to other stochastic SIR models considered in the literature. For instance, the differential system (2.1) can be transformed into a stochastic differential system (SDS) as~:

\begin{equation}
  \left\{
  \begin{array}{lcl}
  \Frac{dx(t)}{dt} & = & - \alpha x(t) y(t) - \sigma x(t) y(t) dW(t), \\ \\
  \Frac{dy(t)}{dt} & = & [\alpha x(t)-c] y(t) + \sigma x(t) y(t) dW(t), \\ \\
  \Frac{dz(t)}{dt} & = & c y (t),
  \end{array}
  \right.
\end{equation}

\noindent where $\alpha, c, \sigma$ are positive scalars and $W$ a Brownian motion [see e.g. Jiang et al. (2011), Rifhat et al. (2017), Xu, Li (2018), El Koufi et al. (2019)].

This automatic SDS extension has at least three drawbacks:
\begin{enumerate}[$i)$]
	\item It is difficult to derive the conditions ensuring solutions between $0$ and $1$, as well as the existence conditions of herd immunity [see e.g. El Koufi et al. (2019), Sections 2.3].

\item More important, this extension does not focus on the "functional parameter of interest", that is the transmission function. In other words such an extension is not structural enough.

\item In particular, it is not appropriate for an analysis of systemic factor ($A$ in our framework) and for comparing the health policies to control this factor.
\end{enumerate}

Another literature [see e.g. Dureau et al. (2013)] extends the benchmark SIR model by assuming that coefficient $\alpha$ is random and time-varying, that is, for $x(t)$: 
\begin{equation}
\label{randomcoefficient}
\frac{\mathrm{d} x(t)}{\mathrm{d}t}=-\alpha(t) x(t) y(t),
\end{equation}
where the dynamics of the stochastic process $(\alpha(t))$ is exogenous. As a comparison, if we equivalently rewrite the first equation of our model as $\frac{\mathrm{d} X(t)}{\mathrm{d}t}=-\frac{A[X(t);\omega]}{X(t)} X(t) Y(t)$, we see that the ratio $\frac{A[X(t);\omega]}{X(t)}$, which plays the same role as $\alpha(t)$ in model \eqref{randomcoefficient}, is also time-varying, but \textit{endogenous}. 
%\textbf{Thus, contrary to some belief [see e.g. Novozhilov (2008), Proposition 1], the exogenous and endogenous modelling are different, at least in a stochastic framework. (not very clear)}
 Depending on the application, this endogeneity might be more desirable. For instance, in the case of COVID-19, the potential change of public health policies are largely influenced by the current state of the epidemic. Moreover, models of type \eqref{randomcoefficient} are usually less tractable than the SPSIR model that we propose. 
 %\textbf{For instance, Boatto et al. (2018) show that if process $(\alpha(t))$ is periodic, the solution of \eqref{randomcoefficient} might have chaotic behavior.} (not really true. The model of Boatto et al. is not SIR but SEIR)
\vspace{1em}
\begin{rem}
A way to reconcile these two approaches is to replace $A[x(t), \omega]$ by $A[x(t),t, \omega]$, where $A[\cdot,\cdot, \omega]$ is a bivariate random field that depends stochastically on both of its first two arguments. This extension is out of the scope of the present paper. 
\end{rem}

\subsection{The dynamics of stochastic transmission}

The SPSIR model (3.3) is completed by modeling the dynamic of stochastic transmission. In order to satisfy Assumption 1, these dynamics are constructed from processes with positive increments. Extensions of the SIR model with constant transmission rate are obtained by considering stationary increments. Then a state deformation is applied to extend to nonconstant transmission rate.\vspace{1em}

\textbf{Example 4~: Gamma process.}\vspace{1em}

Process $A(x), x \in [0,1]$ can be assumed a gamma process, also called increasing Levy process [Revuz, Yor (2001)]. This process is largely used to perform stochastic time change, as in the variance-gamma model [Madan, Carr, Chang (1998)] [see the time change used in the proof of Proposition 3 given in Appendix A  and the role of time changes in epidemiological models Bohner, Peterson (2003)]. It is also used for modelling the risk-neutral dynamics, i.e. the price of time [Clement, Gourieroux, Monfort (2000)].

This is a stochastic process with independent increments, such that the distribution of $A(x+dx)-A(x)$ is a gamma distribution $\gamma (\theta dx, \lambda)$, with degree of freedom $\theta dx, \theta >0,$ and scale parameter $\lambda, \lambda >0$. The expectation of $A(x)$ is~:

$$
E[A(x)] = \Frac{\theta}{\lambda} x.
$$

Therefore in average we get the basic SIR model, with $\alpha = \theta/\lambda$. Compared to the basic SIR model, the SIR model with gamma stochastic transmission involves an additional parameter, that allows for capturing the uncertainty on transmission. The basic SIR model corresponds to the limiting case where $\theta \rightarrow \infty, \lambda \rightarrow \infty,$ such that $\theta/\lambda \rightarrow \alpha >0.$\vspace{1em}

\textbf{Example 5~: Integrated Cox-Ingersoll-Ross (ICIR) process}\vspace{1em}

An alternative modelling assumes that $A(x) = \Int^x_0 B(u) du$, where process $B$ satisfies a stochastic diffusion equation~:

$$
dB(x) = \mu [B(x)] dx + \sigma [B(x)] dW(x),
$$

\noindent where $W$ is a Brownian motion on $[0,1]$ and $\mu, \sigma$ are drift and volatility functions, respectively. To satisfy Assumption 1, the drift and volatility functions have to be chosen in order for process $Z$ to be positive. This arises for instance when $B$ is a Cox-Ingersoll-Ross (CIR) process [Cox, Ingersoll, Ross (1985)]~:

$$
dB(x) = - \gamma (B(x)-\alpha) dx + \sigma \sqrt{B(x)} dW(x),
$$

\noindent where $\alpha>0, \gamma >0, \sigma >0$, with $2\gamma \alpha > \sigma^2$.

We get $E B(x) = \alpha$ and $EA(x) = \alpha x.$ As in Example 4 we get in average the basic SIR model. The SIR model with ICIR stochastic transmission involves two additional parameters $\gamma$ and $\sigma$. Parameter $\sigma$ captures the transmission uncertainty and parameter $\gamma$ the transmission persistence. The basic SIR model is obtained in the limiting case~: $\gamma \rightarrow 0, \sigma \rightarrow 0$.\vspace{1em}

Other diffusion models can be used for $B$ as the constant Elasticity of Variance (CEV) model, where the volatility functions is $\sigma B(x)^\delta$, that is not constrained to be square root [Cox (1996)].
\vspace{1em}

\textbf{Example 6~: Time-changed stochastic processes} %

The models considered in Examples 4 and 5 can be extended by introducing another time change, following the approach of Lin et al. (2018). This introduces extra flexibility for, in particular, the final outcome (herd immunity or complete infection). For expository purpose, let us consider the extension of L\'evy subordinators that is a nonnegative, nondecreasing L\'evy process going to infinity when $t$ goes to infinity, such as gamma process. We denote by $(B_t)$ a L\'evy subordinator, with Laplace transform given by:
$$
\mathbb{E}[e^{v B_t}]=e^{t \ell(v)}, \qquad \forall t,
$$
where function $\ell(\cdot)$ is the Laplace exponent and is monotonous, and the admissible domain of $v$ includes at least all the negative real numbers. Then for any fixed $v$, we define a time- and scale-changed process $A_v$ through:
\begin{equation}
\label{definitionatheta}
A_v(x)=\alpha \exp\Big(v B_{K_v(x)}\Big), \qquad \forall x \in (0, 1)
\end{equation}
where $\alpha$ is a constant and $
K_v(x)= \frac{\log x}{\ell(v)}$ is the time change rule. It is easily checked that this stochastic process $(A_v(x))_{x \in (0 ,1)}$ is such that: $A_v(0)=0$, $A_v(1)=\alpha$. That is, the reproductive number at origin $t=0$ is deterministic and equal to $\frac{\alpha}{c}$. Moreover, for $x$ varying between 0 and 1:
 $$\mathbb{E}[A_v(x)]=\alpha e^{\ell(v) K_v(x) }=\alpha x. $$ 
 Therefore in average, we still get the basic SIR model.\vspace{1em}

Let us now study the existence of herd immunity. We have:
\begin{align}
	\int_0^1  \frac{\mathrm{d} x}{A_v(x)}&=\frac{1}{\alpha} \int_0^1 \exp(-v B_{\frac{\log x}{\ell(v)}}) \mathrm{d}x = \frac{\ell(v)}{\alpha} \int_{0}^\infty \exp(\ell(v)s-v B_s) \mathrm{d} s \label{exponentialoflevy}
\end{align}
where we have used the change of variable $ s=\frac{\log x}{\ell(v)}$, or $\mathrm{d}x= \ell(v)e^{\ell(v)s} \mathrm{d}s$. Thus we get the integral $ \int_{0}^\infty \exp(\ell(v)s-v B_s) \mathrm{d} s$, also called exponential functional of the L\'evy processes $(\ell(v)s-vB_s)_{s \geq 0}$. Its properties are well known in the probability community and, in particular, its distribution, and hence the probability of reaching herd immunity $\mathbb{P}[\int_0^1  \frac{\mathrm{d} x}{A_v(x)}> \frac{1}{c}]$ can be computed numerically [see e.g. Patie, Savov (2013) and references therein]. In some cases, we can also get a simple approximation of this latter probability. Indeed, since  $\int_0^1  \frac{\mathrm{d} x}{A_v(x)}$ is positive, we can apply Markov's inequality and get:
\begin{align*}
\mathbb{P}\Big[\int_0^1  \frac{\mathrm{d} x}{A_v(x)}> \frac{1}{c}\Big] &\leq c \mathbb{E}[ \int_0^1  \frac{\mathrm{d} x}{A_v(x)}] =  \frac{c}{\alpha}\int_0^1 \mathbb{E}[e^{-v B_{K_v(x)}}]\mathrm{d} x \\
&=\frac{c}{\alpha}\int_0^1 e^{\ell(-v)\frac{\log x}{\ell(v)}}\mathrm{d} x=\int_0^1 \frac{c}{\alpha}x^{\frac{\ell(-v)}{\ell(v)}}\mathrm{d} x=\frac{c}{\alpha} \frac{1}{\frac{\ell(-v)}{\ell(v)}+1}.
\end{align*}
provided that $\frac{\ell(-v)}{\ell(v)} >-1$. For instance, if $(B_x)_x$ is a gamma process, that is, if $\ell(v)=-\log(1-v), \forall v <1$, then we have:
$\frac{\ell(-v)}{\ell(v)}=-\frac{\log(1+v)}{\log(1-v)}$, which is larger than $-1$ if and only if $v$ is between 0 and 1. Thus, if $v$ further satisfies $\frac{c}{\alpha} \frac{\log(1-v)}{\log(1-v^2)}<1$, we get an upper bound of the probability of reaching herd immunity. This is an improvement compared to the baseline gamma process model discussed in Example 4, for which there is no result on the distribution of ${1}/{A(x)}$. %Nevertheless, these analyses are well beyond the scope of the paper [see e.g. Patie, Savov (2013) and references therein]. 

\vspace{1em}

The three examples above extend the basic SIR model with constant average transmission rate $\alpha$, that is, they all satisfy $\mathbb{E}[A(x)]=\alpha x$. These dynamics can be adjusted for extending the SPSIR model with nonlinear expected growth pattern for $\mathbb{E}[A(x)]$. This is easily done by applying a state deformation.\vspace{1em}

\textbf{Example 7~: State deformation}\vspace{1em}

Let us denote by $A^*$ a process with stationary positive increments as the gamma process of Example 4, or the integrated CIR process of Example 5. Then, the process defined by~:

$$
A(x) = A^* [a(x)],
$$

\noindent where $a$ is a deterministic increasing function, is such that~:

$$
E A(x) = a(x).
$$

%For instance, we extend the growth assumed in Example 2 by considering $A(x) = A^* (\alpha x^\beta).$\vspace{1em}

\section{Simulation}
This section illustrates by simulation the dynamic properties of a SIR model with stochastic transmission. 
\setcounter{equation}{0}\def\theequation{4.\arabic{equation}}
\subsection{Comparison of simulation methods}
Let us now explain how to simulate trajectories of a SIR model with stochastic transmission. As usual, we are mainly interested in daily values, that are $x(t)$, $y(t)$, $z(t)$, for $t=0,1,2,...$
They are obtained from the simulated values computed with a thin grid, with a timestep of $1/100$, say, keeping only the indexes multiples of $100$. We apply the following methods:

$i)$ We fix $x(0)=\epsilon$, $y(0)=1-x(0)$, $z(0)=0$, and simulate process $A$ from $1$ to $0$ on a thin grid of $1/100000$, say. Then we deduce $\{x(t), y(t), t=k/100\}$ from the integral formula (2.4), approximated by their Riemann sums on the grid for $A$, and keep only the values $[x(t),y(t)], t=0,1,...,T=400$, corresponding to slightly more than 1 year. We simulate the trajectory of $A(\cdot)$ as follows. $A(1-\epsilon)= \alpha=0.1$. Then the trajectory of $(A(x),x \in (0,1-\epsilon))$ is defined as a gamma bridge, that is a gamma process with degree of freedom parameter $\theta$, conditional on its terminal value $\alpha$.\footnote{The motivation of considering a gamma bridge, instead of a gamma process directly, is that at time $t=0$, the value of $A(x(0))=A(1-\epsilon)$ can be recovered first. } The simulated trajectory is provided below. 
\begin{figure}[H]
	\begin{center}
		\includegraphics[scale=0.3]{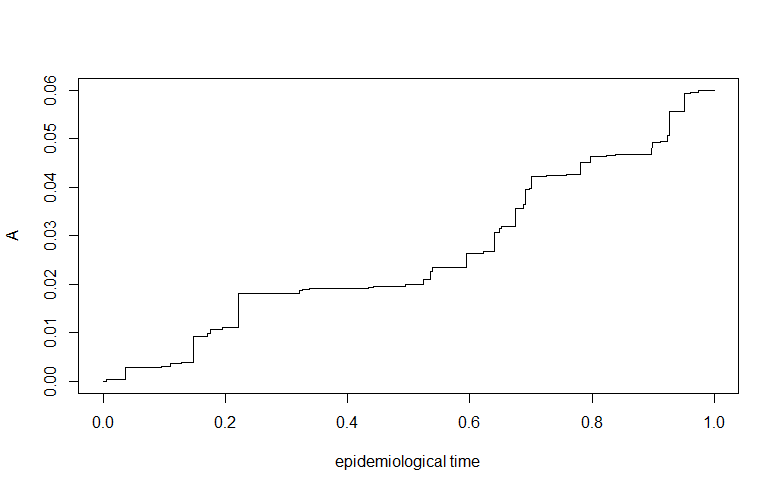}
		\caption{Trajectory of $A(\cdot)$. The parameters of the process are set as $\alpha=0.1, c=0.05,
			\theta=10,
			\epsilon=0.0001$.}
	\end{center}
\end{figure}

$ii)$ We also approximate the continuous time stochastic transmission SIR model by its Euler discretization (see Appendix C for the discussion of Euler discretization of the deterministic SIR model). Therefore, system (3.3) is replaced by the discrete time system of nonlinear recursive equations:
\begin{align} 
		X(n) - X(n-1)&=-h  A [X(n-1), \omega] Y(n-1),\\
	Y(n) - Y(n-1)&=h    A[X(n-1), \omega] Y(n-1) - cY(n-1) ,\\
		Z(n) - Z(n-1)&=h  c Y(n-1), 
\end{align}
where $h=0.1$. This approach avoids the resolution of integral equations at each date $t$. 

$iii)$ We also compute the values $[x(t),y(t)], t=0,1,...,T$ corresponding to the continuous time deterministic baseline SIR model with $EA(x)=\alpha x $ by means of the integral equation, where values of the parameters $\alpha$ and $c$ are the same as before.

$iv)$ Finally, we compute the deterministic analog of system (4.1), that is
\begin{align}
		x(n) - x(n-1)&=-h \alpha x(n-1)  y(n-1),\\
		y(n) - y(n-1)&=h \{ \alpha x(n-1)  y(n-1) - cy(n-1)\},\\
		z(n) - z(n-1)&=h c y(n-1), 
\end{align}
%with $\mathbb{E}[A(x)]=\alpha x$.
%\vspace{1em}
in which $A[X(n-1), \omega]$ is replaced by $\alpha x(n-1)$.

\paragraph{Stochastic SPSIR model.} Let us first first compare algorithms $i)$ and $ii)$ that are based on the same stochastic SPSIR model but different discretization strategies. The simulated trajectories $\big(X(t), Y(t)\big)$ are provided in Figure 2 below. 

\begin{figure}[H]
	\begin{center}
		\includegraphics[scale=0.4]{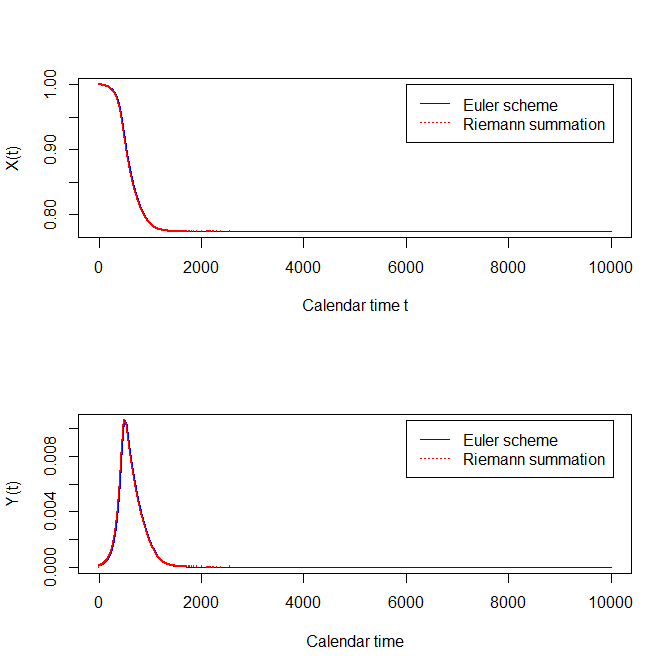}
		\caption{Trajectories of the processes under the Euler scheme and Riemann approximation, respectively. Parameter values are set to $N=100000,
			R_0=1.2,
			c=0.05.$ We also set the initial value to $\epsilon=0.0001$, with timestep $h=0.1$.}
	\end{center}
\end{figure}

Next, we plot the $y(t)$ in epidemiological time $x(t)$ for the two methods above. 

\begin{figure}[H]
	\begin{center}
		\begin{subfigure}{.6\textwidth}
		%	\centering
		\includegraphics[scale=0.25]{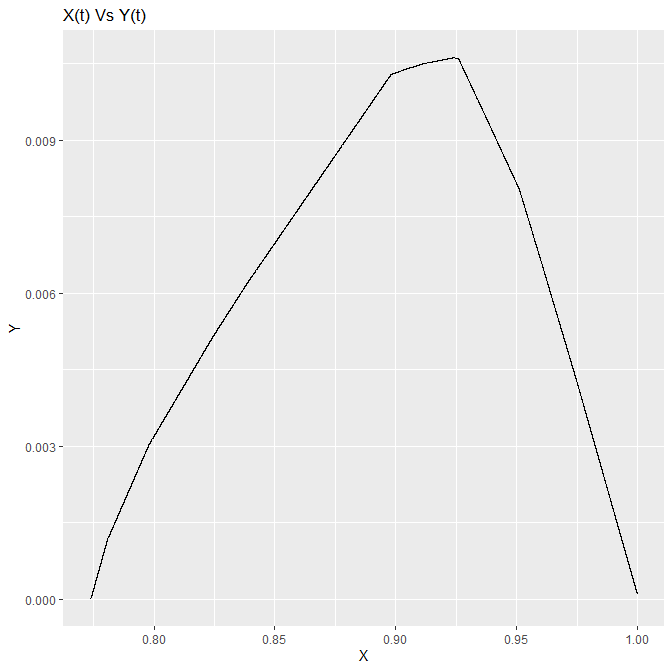}
	\end{subfigure}%
\begin{subfigure}{.6\textwidth}
		%		\centering
		\includegraphics[scale=0.25]{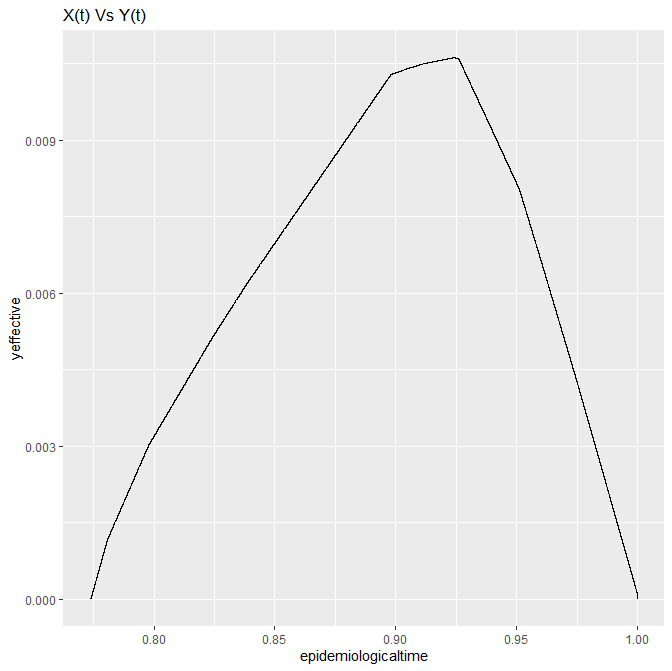}
			\end{subfigure}%
		\caption{Left panel: path of $(x(t), y(t))$ computed using Euler's scheme. Right panel: the same path computed using the integral equation. %Note, that to obtain the latter panel, only the integral equation involving $y(t)$ is used and there is no need to inverse the integral equation involving $x(t)$. 
		}
	\end{center}
\end{figure}

The two simulation methods provide close results. 

Let us now consider another experiment with different parameters. We change the value of $\epsilon$ to $0.0000008$, while holding other parameters constant. Figure 4 is the analog of Figure 2.

\begin{figure}[H]
	\begin{center}
		\includegraphics[scale=0.4]{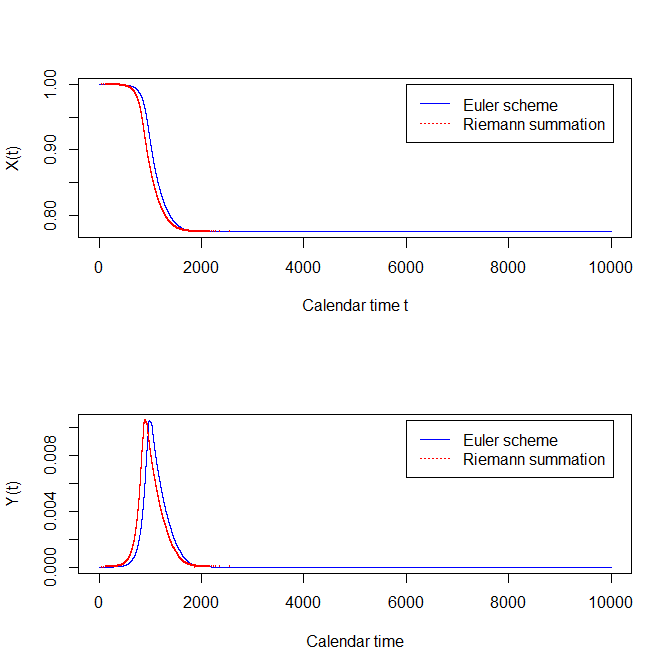}
		\caption{Trajectories of the processes under the Euler Scheme and Riemann approximation, respectively. Parameter values are the same as in Figure 2, with initial value set to $\epsilon=0.0000008.$}
	\end{center}
\end{figure}
We can check that the trajectory of the proportion of infectious individuals $y(t)$ for the Riemann summation is nearly identical  with that of Figure 2. However, the trajectory of $y(t)$ for the Euler scheme has shifted significantly. This shows the weakness of the Euler's discretization scheme. Indeed the time of the peak is defined by $A^{-1}(t)=c$, which should not significantly depend on $\epsilon$. %, since $A$ depends on $\epsilon$ only through a small time linear change and its trajectory is piecewise constant. 

To further support our claim of lack of robustness of the Euler's scheme, we report in Table 1 how this time of the peak changes as a function of $\epsilon$ for different values of $\epsilon$, by using the same simulated trajectory of $A(\cdot)$. 

\begin{table}[H]
	\begin{center}
		\begin{tabular}{|c|c|c|c|c|} 
			\hline
			$\epsilon$ & $8  * 10^{-6}$&	$8 * 10^{-5}$ &	 $8 * 10^{-4}$ & $8 * 10^{-3}$   \\ \hline
			Time of the peak & 1084 & 853 & 623 & 394 \\ \hline
		\end{tabular}
		\caption{Time of the peak for different values of $\epsilon$ under the Euler's scheme. We do not report the corresponding values for the Riemann approximation since the time of the peak ($\tau=1475$) does not change in $\epsilon$ for that method.}
	\end{center}
\end{table}

We have also checked whether increasing the time step in the Euler scheme would provide more accurate approximation, but the results are not modified if we change the value of $h$ from 0.1 to 0.01.

\begin{figure}[H]
	\begin{center}
			\begin{subfigure}{.6\textwidth}
				\centering	
		\includegraphics[scale=0.25]{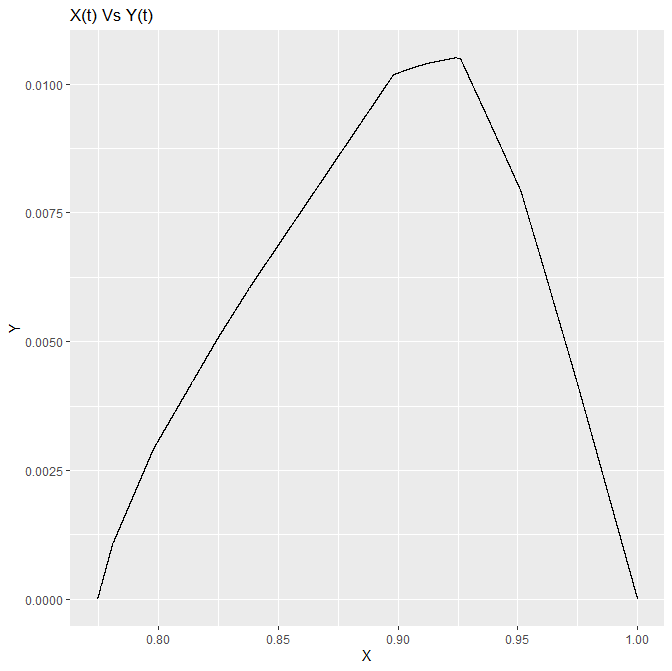}
		\end{subfigure}%
\begin{subfigure}{.6\textwidth}		
		\includegraphics[scale=0.25]{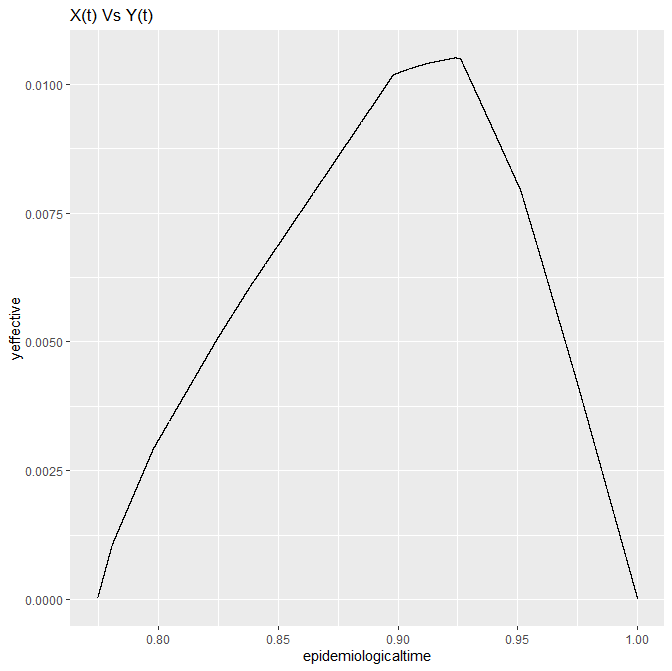}
				\end{subfigure}%
		\caption{Left panel: path of $(x(t), y(t))$ computed using Euler's scheme. Right panel: the same path computed using the integral equation. %Note, that to obtain the latter panel, only the integral equation involving $y(t)$ is used and there is no need to inverse the integral equation involving $x(t)$. 
		}
	\end{center}
\end{figure}

Figure 5 displays the trajectories of $y(t)$ as function of $x(t)$ corresponding to Figure 4. Thus, in epidemiological time, both methods are be quite accurate. However, in calendar time, when $\epsilon$ varies, the stability of the Euler scheme is questionable.   
\vspace{1em}

\paragraph{Deterministic SPSIR model.} Let us now consider the deterministic counterpart of the above analysis, based on the algorithms $iii)$ and $iv)$ above. Figure 6 (resp. Figure 7) are the analog of Figures 2 and 4 (resp. Figures 3 and 5).
\begin{figure}[H]
	\begin{center}
			\includegraphics[scale=0.3]{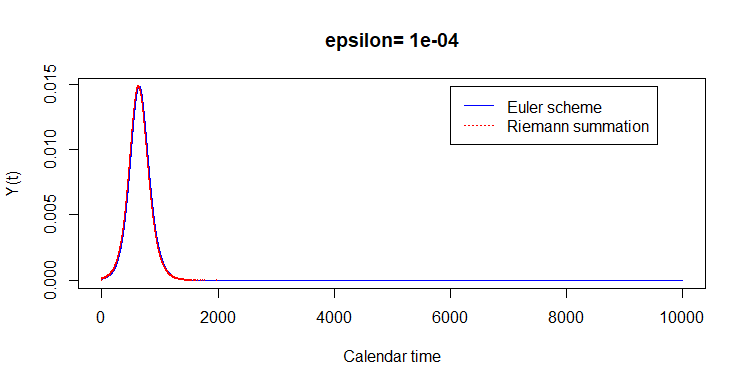}	
			\includegraphics[scale=0.3]{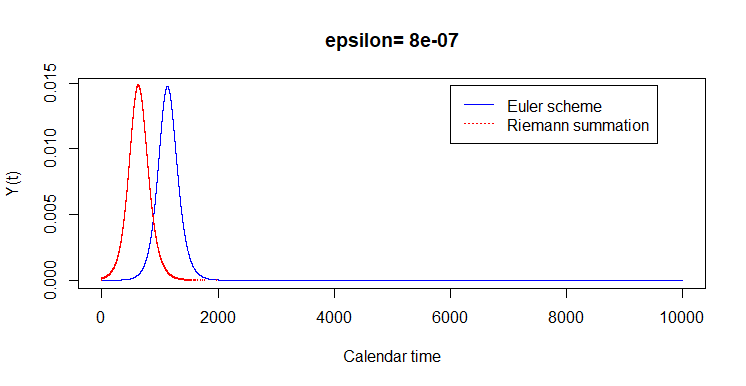}
		\caption{Upper panel: comparison of the path of $y(t)$ using the two approximation schemes, with $\epsilon=10^{-4}$. Lower panel: comparison of the path of $y(t)$ using the two approximation schemes, with $\epsilon=10^{-8}$. The other parameters have been set to their previous values. In both figures, the paths obtained with Riemann summation remains nearly unchanged, but those obtained by Euler's scheme differ significantly for the two values of $\epsilon$. %Note, that to obtain the latter panel, only the integral equation involving $y(t)$ is used and there is no need to inverse the integral equation involving $x(t)$. 
		}
	\end{center}
\end{figure}

\begin{figure}[H]
	\begin{center}
		\begin{subfigure}{.6\textwidth}
			\centering	
			\includegraphics[scale=0.25]{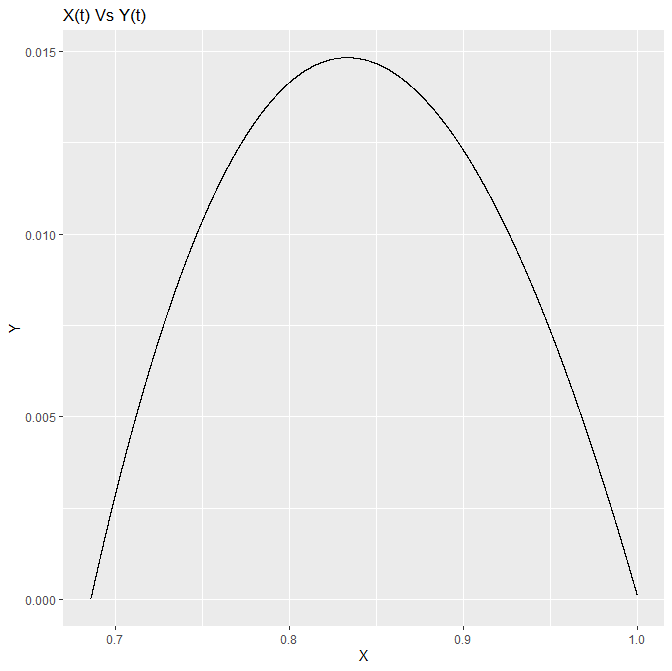}
		\end{subfigure}%
		\begin{subfigure}{.6\textwidth}		
			\includegraphics[scale=0.25]{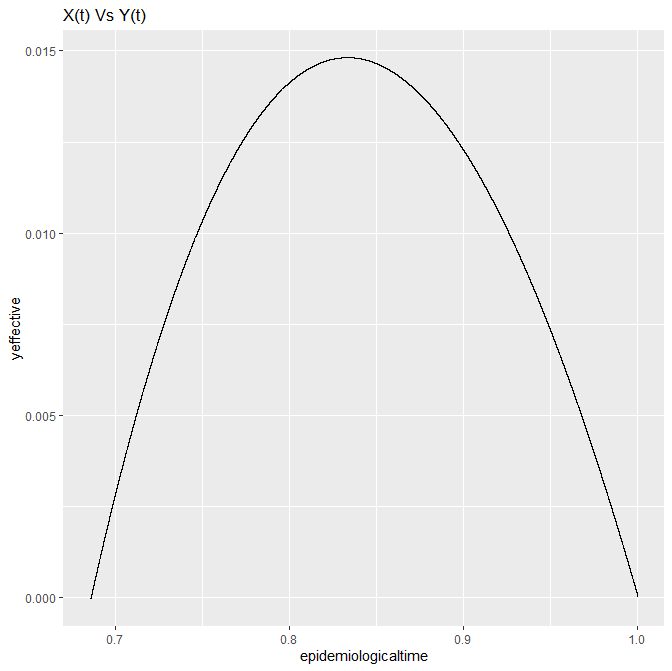}
		\end{subfigure}%
		\caption{Left panel: path of $(x(t), y(t))$ computed using Euler's scheme. Right panel: the same path computed using the integral equation. %Note, that to obtain the latter panel, only the integral equation involving $y(t)$ is used and there is no need to inverse the integral equation involving $x(t)$. 
		}
	\end{center}
\end{figure}
Again, despite providing unreliable paths for both $x(t)$ and $y(t)$, the Euler's scheme still provides reliable parametric equation $(x(t), y(t))$, that is the progression of the epidemic in epidemiological time (see Figure 7). 

\paragraph{Reproductive numbers. } We also compute the basic and effective reproductive numbers during the evolution of the epidemic by formulas \eqref{definition} and \eqref{effective}. We compute $x(t)$ and $y(t)$ using the algorithm $i)$, and plot both reproduction numbers in epidemiological time. They are displayed on Figure 8. 
\begin{figure}[H]
	\label{R0}
	\centering
	\includegraphics[scale=0.35]{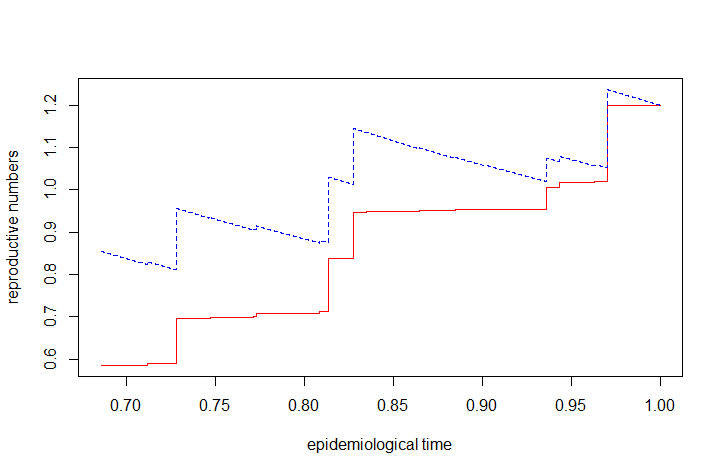}
	\caption{Basic (red full line) and effective (blue dashed line) reproductive numbers as a function of the epidemiological time $x(t)$, which decreases from $1-\epsilon$ for $t=0$ to the final size $x(\infty)$ at $t=\infty$.}
\end{figure}
This figure shows the variability in time of the $R_0$. In particular, the jumps of the reproduction numbers in the plots are due to the fact that the trajectory of $A(\cdot)$ can feature abrupt change (see Figure 2). Such non-smoothness can also be observed in Figure 5, which is based on the same stochastic transmission function $A(\cdot)$. This effect does not exist in Figure 7, which is based on a smooth, deterministic function $A(x)=\alpha x$.

\subsection{Monte-Carlo study}
The figures in Section 4.1 have been constructed from one trajectory of the gamma bridge transmission process. Let us now discuss the consequences of the uncertainty on the transmission function. By replicating the simulations, we can derive, by Monte-Carlo, the distribution of:
\begin{itemize}
	\item Reproductive numbers $R_0(t)$, and $R_e(t)$ at 20 days and 30 days;
	\item the time of the peak $\tau$, defined by the condition $A[X(\tau)]=c$, or equivalently $X(\tau)=A^{-1}(c)$. Using (2.4), it is given by:
	$$
	\tau=\displaystyle  \Int^{x(0)}_{A^{-1}(c)} \frac{dv}{A(v) [ v-1- c \Int^v_{x(0)} \Frac{du}{A(u)}]}.
	$$
	\item the size of the peak $Y(\tau)$, which can be computed through:
	$$
	Y(\tau)=1-A^{-1}(c)+ c \Int_{x(0)}^{A^{-1}(c)} \Frac{du}{A(u)}.
	$$
	\item the time at which the proportion of infected people is larger than 1$\%$ (called days of saturation henceforth), to see the compatibility with the number of available beds. 
\end{itemize}
We use the same parameter values as in Figure 2, and simulate 200 realizations of the epidemic process. The (marginal) distributions of the above epidemiological indicators are provided in Figure 9.

\begin{figure}[H]
	\centering
	\includegraphics[scale=0.6]{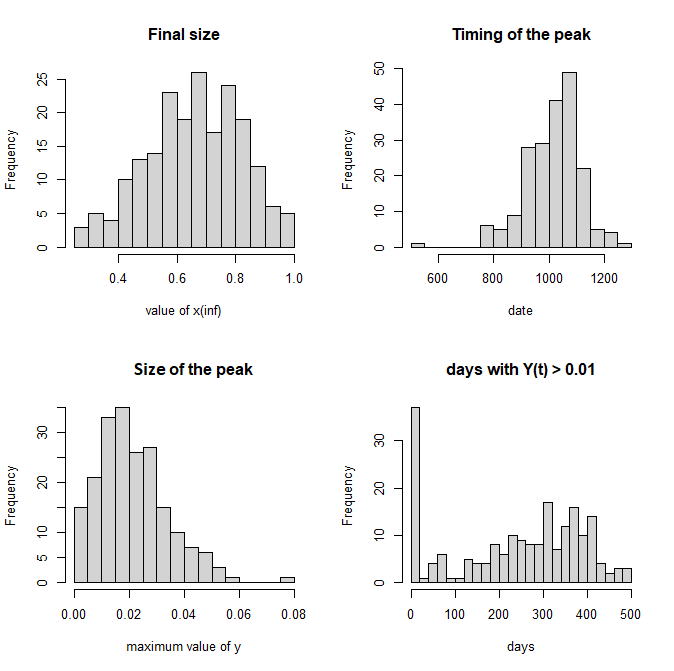}
	\caption{Histograms of the final size (upper left panel), the time of the peak (upper right panel), the size of the peak (lower left panel), as well as the days of saturation (lower right panel). }
\end{figure}
All the four epidemiological indicators vary greatly across the different scenarios. In particular, in the Southeast panel, for some scenarios, $y(t)$ never exceeds 1$\%$, which explains a large probability of observing value 0. To see how these indicators are related to each other, we also report their pairwise scatterplots: 
\begin{figure}[H]
	\centering
	\includegraphics[scale=0.6]{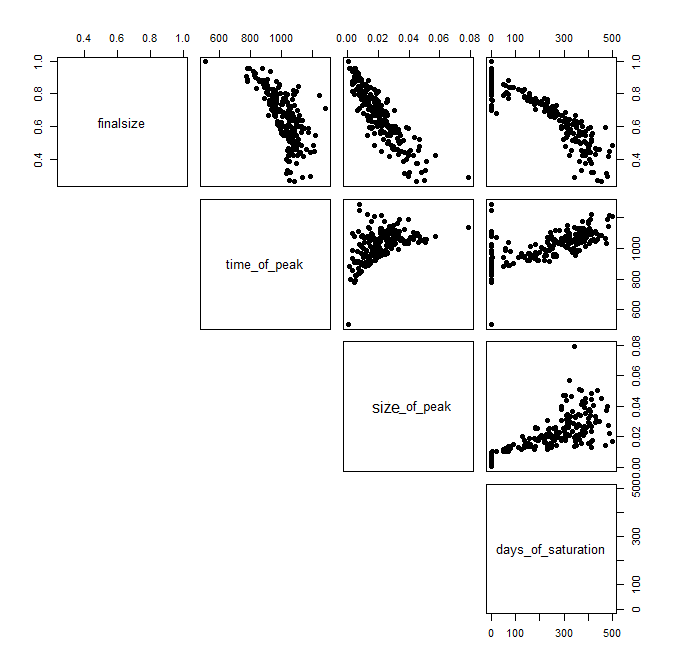}
	\caption{Pairwise scatter plots between the four epidemiological indicators considered in Figure 9.}
\end{figure}
For instance, in the upper right corner, we observe a negative dependence between the final size and number of days of saturation. We also observe the positive dependence between the time and size of the peak. Longer the explosion episode, higher the size of the peak.

%Let us now compare the discretization using the closed form solution, as well as the one involving crude Euler scheme. We focus on the computation of the herd immunity level, that is the limit $x(\infty)$ given by the two approaches. 

%We specify the stochastic function $A(\cdot)$ as $A(x)=xB(x)$, where $B(\cdot)$ is the trajectory of a gamma process, hence both functions $B$ and $A$ are increasing. Further, we assume the value of $B(1)$ known and non stochastic. As a consequence, for $B(x)/B(1)$ has beta distribution $Beta(x, 1-x)$. For a fixed $N$, 
%We first compute $x(\infty)$ using the closed form solution, that is, by solving the integral integration in $v$:$$v=1-c \int_{1-\epsilon}^v \frac{\mathrm{d}u}{A(u)}$$
%where $x(0)=1-\epsilon$ for a small positive constant $\epsilon$. As the integral in the above equation cannot be further simplified, we approximate it using Riemann sums. 

%Since $A$ is increasing, the integral $\int_{1-\epsilon}^v \frac{\mathrm{d}u}{A(u)}$ is under-estimated by its right sum. 
%\textbf{Remark 4~:} It would also be possible to assume a gamma process, or an ICIR process, for $1/A$ instead of $A$, taking into account the change of monotonicity. This may simplify computations or simulations due to the form of integral equation (2.2).

\section{Observational uncertainties}
\setcounter{equation}{0}\def\theequation{5.\arabic{equation}}

The stochastic SIR model introduced in Section 3 concerns theoretical cross-sectional probabilities corresponding to a virtual population of infinite size. However, in practice, the population size is generally large, but finite, and the observed cross-sectional frequencies differ from their theoretical counterparts. The aim of this section is to analyze in detail the distribution of observed counts when the underlying mechanistic model is a stochastic SIR. In this section, we first describe how the observations are related to the underlying SIR model with stochastic transition. This leads to a new type of nonlinear state space model. 

%Such a model is usually estimated by simulation based methods as method of simulated moments, or indirect inference. These approaches are based on artificial paths simulated from the SIR model, or more precisely from a time discretized version of this model with a small time unit. However in this nonlinear dynamic framework, these approximations are not always reliable in particular for the long run properties of the evolution of the disease. %This feature is discussed in detail in Appendix C on Euler discretization.

This nonlinear state space representation is appropriate for simulating artificial paths of observations from the SIR model (then also to apply simulation based estimation methods). We will provide such simulations of cross-sectional frequencies that can be compared with the simulations of their theoretical counterparts corresponding to a population of infinite size. This allows for comparing the effects of the two types of uncertainties: the intrinsic uncertainty due to the stochastic transmission function and the observational, sampling uncertainty due to the finiteness of the population size. We will study how their relative magnitude vary during the evolution of the epidemic.

%To circumvent this potential difficulty, we propose to estimate the main parameters by marginal likelihood maximization, using the simple dynamics of the proportion of infected individuals written in the transformed (inverse) time scale defined by the proportion of susceptible individuals.

\subsection{The model}

The observations are generally available in discrete time, i.e. daily, and aggregated over individuals to get time series of counts. Let us denote $[I_{i,t}, t \in (0, \infty)]$ the continuous time series of state indicators for individual $i, i=1,...,N$. Thus $I_{i,t}=j$, $j=1,2,3$, if individual $i$ is in state $j$ at date $t$, where $j=1=S, j=2=I, j=3=R$.

%. The model is defined as follows~:\vspace{1em}

\begin{Ass}
	\label{exchangeable}
$i)$ The population size $N$ is time independent.

$ii)$ The continuous time individual histories $(I_{i,t}, t \text{ varying })$, $i=1,...,N$ are independent, identically distributed. 

$iii)$ The dynamics of an individual history is driven by a stochastic intensity deduced from system (3.3). 
\end{Ass}

Assumption \ref{exchangeable} $ii)$, $iii)$ are standard [see e.g. Gourieroux, Jasiak (2020a, b)]. They mean that the population of individuals is "homogeneous", that is, these individuals are exchangeable and their histories are independent. %Then the individual observations can be aggregated\footnote{Note that these cross-sectional counts are not sufficient summaries of the individual histories. In such a model with stochastic transmission, aggregate flow counts are also not sufficient [see Breto et al. (2009) for the use of flow counts].}, and the multinomial distribution (conditional on $X,Y,Z)$ is a direct consequence of such i.i.d. assumptions of individual histories (given $X,Y,Z$).\vspace{1em}

Under Assumptions \ref{stochastictransmission}, \ref{exchangeable}, we can aggregate the individual histories to get daily counts. We denote by $N_{jk}(t)$ the number of individuals transiting from state $j$ at date $t-1$ to state $k$ at date $t$. Note that by definition of the states we have: $N_{jk}(t)=0, \forall k<j$. We also denote the marginal counts by $N_k(t)$, that are the number of individuals in state $k$ at date $t$. By definition, we have:
\begin{align}
	N_k(t)&= \sum_{j=1}^3 N_{jk}(t), \qquad \forall k \in \{1,2,3\}, \\
	N_j(t-1)&= \sum_{k=1}^3 N_{jk}(t), \qquad \forall j \in \{1,2,3\}.
\end{align}

%\begin{Ass} 
%	\label{poissonapproximation}
%	Parameters $\alpha, c$ are small. More precisely, $\alpha, c$ tend to zero when $N$ tends to infinity, such that $\lim_{N \to \infty} \alpha N= \alpha^*>0$, $\lim_{N \to \infty} c N= \gamma>0$. 
%\end{Ass}

Let us now introduce sufficient conditions on the fundamental $c, N, A(\cdot)$ of the SIR model to allow for Poisson approximation. 
\begin{Ass} 
	\label{poissonapproximation}
	Parameters $c$ and the process $A(x)/x$, indexed by $x$, are small. More precisely, $c$ and $A(x)/x$ tend to zero, when the population size $N$ tends to infinity, such that $\lim_{N \to \infty} c N= \alpha^*>0$, and process $N A(x)/x$ converges weakly to $\delta(x)>0$, which is a stochastic process indexed by $x$, whose distribution is independent of $N$.  
\end{Ass}
%Note that since $A(1)$ is stochastic, the limit $\alpha^*$ can also be stochastic (but will become ``recoverable" at the outset of the epidemic). 
%Let us discuss specifications of the stochastic transmision function that satisfiy Assumption 4.  If $A(\cdot)$ is a gamma process with degree of freedom parameter $\gamma$ and scale parameter $\theta$, then the condition $\lim_{N \to \infty} c A(x)/x= \delta(x)>0$ is satisfied, if $\lim_{N \to \infty} c \theta$ is finite and $\gamma$ is independent of $N$. 

Let us discuss some sufficient conditions for Assumption 4 to hold. We first write any stochastic transmission process $A(\cdot)$ as $A(x)= A(1) H(x)$, with process $H(\cdot)$ increasing and satisfies $H(1)=1$. Then we assume that $A(1)$ and $H(\cdot)$ are independent. Moreover, we assume that only the distribution of $A(1)$ depends on $N$, but not that of $H(\cdot)$.  Then a sufficient condition for $\lim_{N \to \infty} N A(x)/x$ to be positive and finite is that $\lim_{N \to \infty} N A(1)$ is positive and finite. Then process $H(\cdot)$ can be any increasing process that maps $[0, 1]$ to itself. For instance, the following two specifications are compatible with the Examples 4 and 6 given in Section 3.3. 
\begin{itemize}
	\item We can choose $H(\cdot)$ to be a L\'evy bridge, that is a L\'evy process on $[0, 1]$ given that its terminal value is equal to 1. Note that process $A(\cdot)$ typically is not a L\'evy process. Indeed, it is L\'evy if and only if $A(1)$ is gamma distributed, that is, if and only if $A(\cdot)$ is a L\'evy process, which is the model considered in Example 4 and used in the simulation Section 4. In this case, for any given $x \in [0, 1]$, the distribution of $H(x)$ is Beta($\theta x, \theta (1-x))$, where $\theta$ is the degree of freedom parameter of the gamma process.  
	\item  alternatively, we can specify $H(\cdot)$ as the time changed L\'evy process defined in Example 6, that is $H(x)=\exp(vB_{K_v(x)})$, see equation (3.6). 
\end{itemize}

\begin{Ass}
	\label{markov}
	Process $A(\cdot)$ is Markov in reverse time.
\end{Ass}
For instance, this Markov assumption is satisfied in Examples 4 and 6. 

\begin{prop}
	Under Assumptions \ref{stochastictransmission}, \ref{exchangeable}, \ref{poissonapproximation} and \ref{markov}, 
	\begin{enumerate}[$i)$]
		\item Process $N(t):=[N_1(t), N_2(t), N_3(t)=N-N_1(t)-N_2(t)]'$ is a Markov process, conditional on stochastic transmission process $(A(x))_{x \in (0, 1)}$.
		\item This process is also Markov conditional on the information set $[A(x), x>X(t)]$
		\item The transition matrix between $t-1$ and $t$ conditional on stochastic transmission process $(A(x))_{x \in (0, 1)}$ can be written as:
		$$
		P_t= \begin{bmatrix}
			1-p_{12}(t) &  p_{12}(t) & o(\frac{1}{N}) \\
			0 & 1-p_{23}(t) & p_{23}(t) \\
			0& 0& 1 
		\end{bmatrix},
		$$
	%	where $p_{12}(t)$, $p_{23}(t)$ are of order $\frac{1}{N}$, and $p_{13}(t)$ is negligible with respect to $\frac{1}{N}$. 
	where \begin{align*}
p_{12}(t)&= 1-\exp\Big(-\int_{t-1}^t \frac{A[X(s)]}{X(s)} Y(s) \mathrm{d}s \Big),\\
\text{and  }p_{23}(t)&=1-\exp(-c),
\end{align*}
are of order $\frac{1}{N}$, and $p_{13}(t)$ is negligible with respect to $\frac{1}{N}$.
		\item The count $N_{13}(t)$ is negligible with respect to counts $N_{12}(t), N_{23}(t)$.
		\item Conditional on the information available at date $t-1$, and for large $N$, $N_{12}(t)$ and $N_{23}(t)$ are independent, with approximate Poisson distributions:
		\begin{equation}
			\label{conditionalpoissondistribution}
		N_{12}(t) \sim \mathcal{P}(N_{1}(t-1)p_{12}(t)), \qquad 	N_{23}(t) \sim \mathcal{P}(N_{2}(t-1)p_{23}(t)).
	\end{equation}
	\end{enumerate}
\end{prop}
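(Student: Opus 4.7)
My plan is to derive each item from the individual-level description of the model. Under Assumption 3, the $N$ continuous-time histories $(I_{i,t})_t$ are i.i.d.\ Markov chains on $\{1,2,3\}$ whose stochastic intensities are read off from system (3.3): a currently susceptible individual becomes infected at hazard rate $\lambda_{12}(s)=\frac{A[X(s)]}{X(s)}Y(s)$, an infected individual recovers at constant rate $c$, and state 3 is absorbing. The aggregate count $N(t)=(N_1(t),N_2(t),N_3(t))$ is the sum of these $N$ independent paths, so the Markov properties reduce to bookkeeping, the transition probabilities follow by integrating the hazards, and Assumption 4 collapses the resulting Binomials into Poissons via the law of rare events.

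For (i) and (ii), I would first fix the full trajectory of $A(\cdot)$. The only source of time dependence in the individual jump intensities is through $(X(s),Y(s))$, which is a deterministic function of the aggregate state $N(s)/N$. Since individuals are independent with jump rates depending only on their own state and on this shared aggregate environment, the aggregate $N(\cdot)$ is Markov conditional on $A$, giving (i). For (ii) I would invoke Assumption 5: because $X$ is decreasing, the future of the count process from date $t$ onwards is driven by $A$ only through its restriction to $[0,X(t)]$, and by the reverse-time Markov property this restriction is conditionally independent of $[A(x),x>X(t)]$ given $A(X(t))$. Hence the partial information $[A(x),x>X(t)]$ already suffices to detach the future from the past.

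For (iii), integrating the individual hazards over $[t-1,t]$ yields the one-step survival probabilities. An individual still in state 1 at $t-1$ remains in state 1 at $t$ with probability $\exp\bigl(-\int_{t-1}^{t}\lambda_{12}(s)\,ds\bigr)$, giving the stated $p_{12}(t)$; an individual in state 2 at $t-1$ remains in state 2 at $t$ with probability $e^{-c}$, giving $p_{23}(t)=1-e^{-c}$. For $p_{13}(t)$, I would upper-bound it by the probability that a single individual performs at least two jumps in one unit of time, which is of order $(c+\sup_{s\in[t-1,t]}\lambda_{12}(s))^{2}$. Under Assumption 4 both $c$ and $A(x)/x$ are $O(1/N)$, so $p_{13}(t)=O(1/N^{2})=o(1/N)$. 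Part (iv) then follows at once: $E[N_{13}(t)\mid\cdots]\le N\,p_{13}(t)=o(1)$, while $E[N_{12}(t)\mid\cdots]$ and $E[N_{23}(t)\mid\cdots]$ are of order $1$.

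For (v), conditional on $N(t-1)$ and on the trajectory of $(A,X,Y)$ over $[t-1,t]$, the count $N_{12}(t)$ is the number of successes in $N_1(t-1)$ independent Bernoulli trials with parameter $p_{12}(t)$, while $N_{23}(t)$ is the analogous sum over the disjoint pool of individuals in state 2 at $t-1$, with parameter $p_{23}(t)$. Disjointness of the two pools, combined with individual independence, gives conditional independence, and the classical Binomial-to-Poisson approximation applied to $p_{ij}(t)=O(1/N)$ with $N_i(t-1)=O(N)$ yields the stated conditional Poisson distributions with means of order $1$. I expect the main subtlety to lie in the careful handling of the nested conditioning: $p_{12}(t)$ is itself random given the information available at $t-1$, since the trajectory of $(X,Y)$ on $[t-1,t]$ is not known at $t-1$, so the distribution in \eqref{conditionalpoissondistribution} is of doubly-stochastic (Cox) type, and transferring from the trajectory-conditioned Binomial to the claimed Poisson must be done uniformly in this random intensity.
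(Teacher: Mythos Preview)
Your proposal is correct and follows essentially the same route as the paper: individual i.i.d.\ Markov histories driven by the hazards $\lambda_{12}(s)=A[X(s)]Y(s)/X(s)$ and $c$, integration of these hazards over $[t-1,t]$ to obtain $p_{12}(t)$ and $p_{23}(t)$, and the Binomial-to-Poisson approximation under Assumption~4 for part~(v). Your treatment is in fact more explicit than the paper's in two places: you actually invoke the reverse-time Markov Assumption~5 to justify~(ii), whereas the paper only alludes to $N_1(t-1),N_2(t-1)$ being sufficient summaries of the past of $A$; and you give a quantitative $O(1/N^2)$ bound for $p_{13}(t)$, whereas the paper just states informally that a susceptible individual ``cannot be infected, then recover in just one day.'' The subtlety you flag at the end---that $p_{12}(t)$ is random given the date-$(t-1)$ information so the Poisson is of Cox type---is real and is not addressed in the paper's argument, which treats the Binomial$\to$Poisson step as if the success probability were known.
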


\begin{proof}
	$i)$ is a direct consequence of the stochastic differential system, that becomes deterministic given $A(\cdot)$. 
	
$ii)$ is also a consequence of this deterministic system conditional on $A(\cdot)$. This implies that $N(t)$ depends on the information available at $t-1$ by means of just $N_{1}(t-1)$ and $N_{2}(t-1)$ that are complicated summary statistics of the transmission process up to stochastic, epidemiological time $X(t)$.

$iii)$ The form of the transition matrix is a consequence of Assumption A.3. It is obtained by integrating the effect of the stochastic intensity between $t-1$ and $t$, that is of integrating $A(\cdot)$ between $x(1-\epsilon)$ and $x(t-\cdot)$ conditional on the path of $A(\cdot)$ from $1$ to $X(t-\cdot)$. The fact that $p_{13}(t)$ becomes negligible simply means that a susceptible individual cannot be infected, then recover in just one day. 

$iv)$ In particular, we have $N_{13}(t)= 0, \forall t$. Then we have:
\begin{align}
	N_{12}(t)&=N_1(t-1)-N_1(t), \label{increment1}\\
	N_{23}(t)&=N_3(t)-N_3(t-1). \label{increment2}
\end{align}
Therefore we have the same information in the sequence of counts $[N_{jk}(t), t \text{ varying}]'$, and in the sequence of cross-sectional counts $[N_k(t), t \text{ varying}]'$

$v)$ This is a consequence of aggregation of qualitative histories: The counts corresponding to two different lagged status are independent. Therefore $[N_{11}(t), N_{12}(t), N_{13}(t)]$ are independent of the $[N_{21}(t), N_{22}(t), N_{23}(t)]$ given the past. Moreover, they are such that $N_{12}(t) \sim \mathcal{B}(N_{1}(t-1)p_{12}(t))$, $N_{23}(t) \sim \mathcal{B}(N_{2}(t-1)p_{23}(t))$ given the past. Then the result follows from Assumption A.4 that allows for the Poisson approximation of the binomial distribution. 

%\textbf{Comment by Yang: Alternative proof without relying on Assumption 4: assuming that $p_{12}(t)$ is small, then the variance of the conditional binomial distribution is $N_1(t-1) p_{12}(t) (1- p_{12}(t)) \approx N_1(t-1) p_{12}(t) $. Hence we have approximately a Poisson distribution, whose parameter is (roughly) proportional to the population size $N$.}

%\textbf{La raison pour laquelle je pense qu'il faut éviter des hypothèses du tyme $NA(x)$ et $cN$ constant est qu'elles impliquent que $c$ et $A(x)$ tendent vers zéro. Du coup $y(t)$ tend vers zéro aussi pour $t$ fixé. Si l'on regarde l'expression de $p_{1,2}(t)$, on aperçoit que ceci tend vers zéro plus vide que $1/N$, rendant l'approximation Poissonienne non standard. Le problème vient du fait que en tendant $c$ et $A(x)$ vers zéro, on a changé aussi les proportions $x(t), y(t)$, c'est à dire la dynamique dans le temps épidemiologique. Or le SIR de base est utilisé à la fois pour les comptage et pour les proportions et derrière il y a cette idée d'invariance par échelle.

%Il faut sans doute aussi parler de la valeur initiale: dans Sections 2, 3, 4, on travaillait sur les proportions et on suppose $y(0)=\epsilon$. Maintenant, comme on passe aux comptage, il faut une condition initiale pour $N_2(0)$. Cela peut être $N_2(0)=1$, ou $N_2(0) \approx N\epsilon$, avec $\epsilon$ fixé ne dépendant pas de $N$. Ces deux hypothèses ne sont pas la même}
\end{proof}

By using (5.2), we deduce the following corollary:
\begin{Corollary}
	The conditional distribution of $\hat{x}(t)= N_1(t)/N$, $\hat{y}(t)=N_2(t)/N$ given is deduced by noting that:
	\begin{align*}
		 \hat{x}(t)&=\hat{x}(t-1)-\frac{N_{12}(t)}{N}\\
		 \hat{y}(t)&=\hat{y}(t-1)+ \frac{N_{12}(t)}{N}-\frac{N_{23}(t)}{N}
	\end{align*}
where $[N_{12}(t), N_{23}(t)]'$ follow the distribution in Proposition 6 $v)$. 
\end{Corollary}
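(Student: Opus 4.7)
The plan is to prove the two identities as direct accounting/balance equations on the cross-sectional counts, and then transfer the conditional distribution from Proposition 6 $v)$ through these deterministic linear relations. Since $\hat{x}(t) = N_1(t)/N$ and $\hat{y}(t) = N_2(t)/N$, everything reduces to expressing $N_1(t) - N_1(t-1)$ and $N_2(t)-N_2(t-1)$ in terms of the transition counts $N_{jk}(t)$.

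First I would use the marginal identities (5.1)--(5.2). Because the state order forbids transitions $j \to k$ with $k<j$, the column sum for state $1$ at date $t$ reduces to $N_1(t) = N_{11}(t)$, while the row sum at date $t-1$ gives $N_1(t-1) = N_{11}(t) + N_{12}(t) + N_{13}(t)$. Subtracting and invoking Proposition 6 $iv)$ (which says $N_{13}(t)$ is negligible compared with $\frac{1}{N}$, in fact $N_{13}(t) = 0$ almost surely in the Poisson limit) yields $N_1(t-1) - N_1(t) = N_{12}(t)$, hence the first recursion after dividing by $N$. For the infected compartment, $N_2(t) = N_{12}(t) + N_{22}(t)$ (again using that $N_{j2}(t) = 0$ for $j>2$), whereas $N_2(t-1) = N_{22}(t) + N_{23}(t)$ (with $N_{21}(t) = 0$); subtracting gives $N_2(t) - N_2(t-1) = N_{12}(t) - N_{23}(t)$, which after normalization by $N$ produces the second identity.

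Once these two deterministic recursions are established, the conditional distribution of $(\hat{x}(t),\hat{y}(t))$ given the information available at $t-1$ is obtained by plugging in the conditional joint law of $(N_{12}(t), N_{23}(t))$ stated in Proposition 6 $v)$: conditionally independent Poisson variables with means $N_1(t-1) p_{12}(t)$ and $N_2(t-1) p_{23}(t)$, where the intensities $p_{12}(t)$ and $p_{23}(t)$ are the ones displayed there. There is no genuine obstacle beyond bookkeeping; the only point requiring care is the use of Proposition 6 $iv)$ to discard $N_{13}(t)$ in the balance for $N_1$, which is precisely what justifies writing equality (rather than approximate equality) in the recursions under the Poisson approximation regime of Assumption~4.
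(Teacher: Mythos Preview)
Your argument is correct and is essentially the paper's own: the corollary is stated immediately after Proposition~6 with the one-line justification ``By using (5.2)'', and within the proof of Proposition~6 $iv)$ the paper has already recorded $N_{13}(t)=0$ together with the increment identities $N_{12}(t)=N_1(t-1)-N_1(t)$ and $N_{23}(t)=N_3(t)-N_3(t-1)$, from which your two recursions follow after dividing by $N$. Your invocation of Proposition~6 $v)$ for the conditional law of $(N_{12}(t),N_{23}(t))$ then completes the description of the conditional distribution of $(\hat{x}(t),\hat{y}(t))$, exactly as intended.
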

%\textbf{Comment by Yang: Corollary 4 sounds like a direct consequence of equations \eqref{increment1} and \eqref{increment2}. Is it really necessary?}
\vspace{1em}
%observations are the frequencies $\hat{x} (t), \hat{y}(t), \hat{z} (t), t=1,\ldots, T.$

%iii) Conditional on the underlying processes : $X(t), Y(t), Z(t), t \in (0,\infty)$, the observations $[\hat{x} (t), \hat{y}(t), \hat{z}(t)], t=1,\ldots,T.$ are such that: $N[\hat{x} (t), \hat{y}(t),\hat{z}(t)]$ follows a multinomial distribution $\mathcal{M}[N;X(t),Y(t),Z(t)]$, for any $t$.

%iv) The joint process $(X,Y,Z)$ satisfies the SIR model with stochastic transmission $A$, defined in (3.3), with a parametric distribution for process $A$.\vspace{1em}

\textbf{Remark 8: }Under Assumption 4 that allows for the Poisson approximation of the binomial distribution, the observed frequencies are not necessary good approximations of their theoretical counterparts. Loosely speaking, we have $\hat{x}(t)-x(t)= o (\hat{x}(t))$, if $\gamma(x(t))$ is very large for $N$ tending to infinity. Otherwise, the approximation is not necessarily very accurate. 

\vspace{1em}

Under Assumptions 3, 4 and 5, the model has a non-linear state-space representation with four types of variables, from the highest to the lowest layer~:

$i) A(x), x \in [0, 1]$ is the continuous state latent transmission function;

$ii) X(t), Y(t), Z(t), t>0$ are the continuous time latent probabilities;

$iii) N_{12}(t), N_{23}(t), t=1,...,N$ the observed cross-counts;

$iv) \hat{x} (t), \hat{y}(t), \hat{z}(t), t=1,...,N$ are the discrete time, observed frequencies;
\vspace{1em}

Layers $i)$ and $ii)$ contain state variables: those in $i)$ are stochastic, whereas those in layer $ii)$ are linked deterministically to layer $i)$; Layers $iii)$ and $iv)$ contain measurement variables: those in $iii)$ are linked stochastically to those in layer $ii)$, and those in layer $iv)$ are linked to $iii)$ in a deterministic way. Thus two types of uncertainties are considered, that are the cross-sectional uncertainty (or demographic uncertainty) in the measurement equation $iii)$, and the transmission uncertainty (or environmental uncertainty) in the first layer of nonlinear state equations.

\subsection{Intrinsic versus observational uncertainty}
 
 Let us now repeat the simulation exercise of Section 4, but now for the count variables. Corollary 4 shows that there is the same information in the cross counts and in the marginal counts (a property specific to the SIR model). Therefore we will simulate the marginal frequencies directly.
 
% \textbf{comment by Yang: I would say it is easier to simulate the cross counts using equation \eqref{conditionalpoissondistribution}, then the marginal counts (and hence the marginal frequencies) are deduced by equations \eqref{increment1} and \eqref{increment2}. }
 
 %\textbf{Aussi, je note que l'hypothèse est légèrement différent de ton papier avec Nour, où les intensités sont endogènes et dépendent des fréquences empiriques (à la GARCH). Ici les intensités de transition sont exogènes. Je n'ai pas de problème avec cette hypothèse d'exogéneité, car ça permet de mieux mesurer les deux types de stochasticité. Mais il faut sans doute dire quelque part qu'une autre approche est aussi possible, car l'autre approche est plus classique dans la littérature de probabilité théorique (il semble que ce papier de Daniels est une référence: "THE DISTRIBUTION OF THE TOTAL SIZE OF AN EPIDEMIC")}
 
 We focus on the first $T=400$ of the epidemic, for a population of size $N=50000$. We first simulate one trajectory of $(X(t), Y(t))$, and use equation \eqref{conditionalpoissondistribution} to simulate the trajectory of $N_2(t)$. The following figure plots the time series of counts of infectious individuals. The parameters of the SIR model are the same as that used in Figure 2, except that $R_0={\alpha}/{c}=2$.
  
 \begin{figure}[H]
 	\centering
 	\includegraphics[scale=0.5]{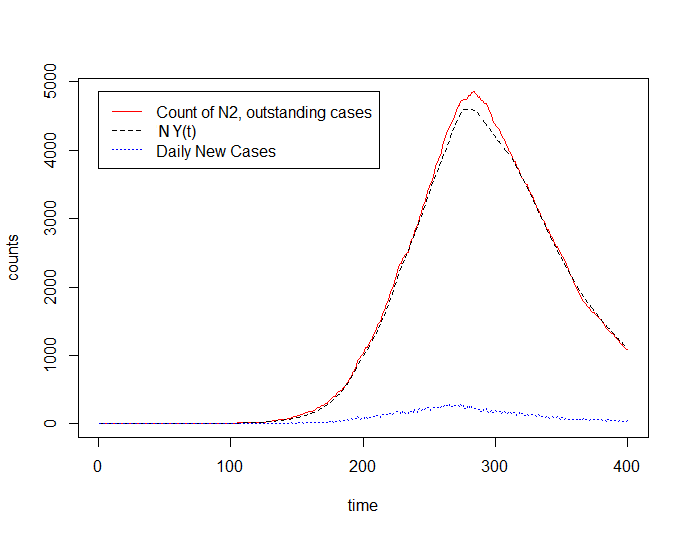}
 	\caption{Full red line: Time series of $N_2(t)$, that is the outstanding infectious cases. Black dashed line: expected number of outstanding cases given the trajectory of $Y$, that is the product between the population size $N$ and the value of $Y(t)$; blue dotted line: Time series of new infection cases. }
 \end{figure}
The observational uncertainty is quite small, even for this population of rather limited size.\footnote{If we increase the population size $N$, the discrepancy between the two curves becomes smaller and smaller due to the law of large numbers applied in the cross-section. Hence these plots are omitted. }On the other hand, the peak of the epidemic is attained at around $t=300$, thus at $T=400$, $X(400)$ is significantly different from $X(0)$ (in the above simulation we have $X(400) \approx 0.75$) and the effect of stochasticity of $A(\cdot)$ is expected to be quite important. To compare these two types of uncertainties, we replicate the same simulation experiments 50 times and compute the histogram of the 
sampling distribution of $N_2(T)=N_2(400)$ and $N Y(400)$. The former accounts for both types of uncertainty, whereas the latter only accounts for the intrinsic uncertainty. 
\begin{figure}[H]
	\centering
	\includegraphics[scale=0.5]{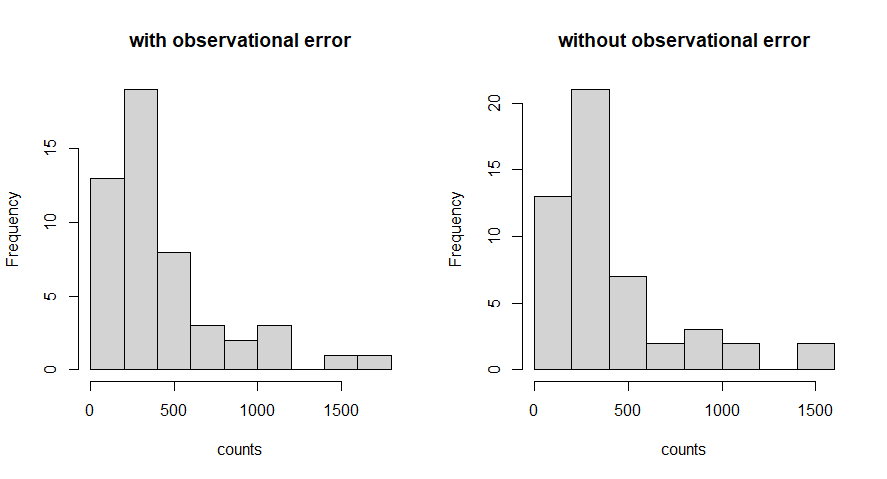}
	\caption{Left panel: histogram of the final count $N_2(400)$. Right panel: histogram of $N Y(400)$. }
\end{figure}
Both histograms are quite similar, hence the intrinsic uncertainty is much more important than the sampling uncertainty. 

Let us now repeat the same simulation, but by diminishing significantly $R_0$ from 2 to 1.2. This means that the epidemic evolves much more slowly, and hence to get comparable counts, we multiply the size of the population used in the previous two figures by 100 to arrive at $N=5,000,000$,  (which is roughly the size of large Metropolitan regions such as Boston, Toronto, etc). The following figure is the analog of Figures 11 and 12.

\begin{figure}[H]
	\centering
	\includegraphics[scale=0.4]{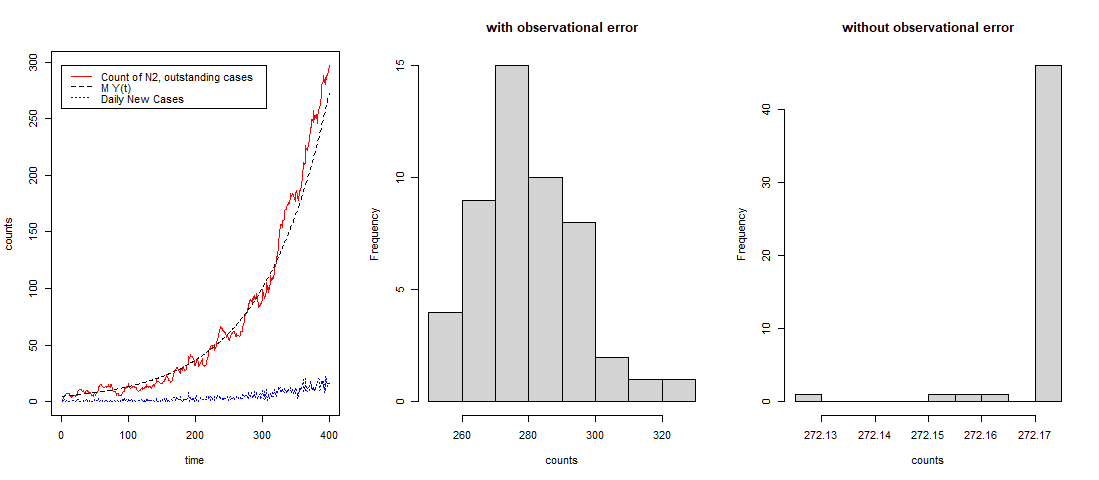}
	\caption{Left panel: Simulated trajectories of $N_2(t), NY(t)$ for $t=1,...,T$. Central panel: Histogram of the final count $N_2(T)$ for 50 realizations of $A(\cdot)$. Right Panel: Histogram of $N Y(T)$. }
\end{figure}
In this example, at $T=400$, the epidemic is still at an early stage, with $X(400) \approx 0.03 \%$ across various scenarios. The right panel shows that the variation of $Y(400)$ across different epidemiological scenarios is nearly negligible, which is expected, since process $A(\cdot)$ is continuous at 1, we have $A(t) \approx \alpha$ for any $t=1,...,T$, which does not vary across different scenarios. Then the observational uncertainty, characterized by the histogram in the central panel, becomes dominant. 

Therefore, the relative importance of the sampling and intrinsic uncertainties depends on the stage of the epidemic. At the early stage, the former is more important; as we advance to later stages, the intrinsic becomes dominant. Moreover, the larger the population, the smaller the relative importance of the sampling uncertainty. %This latter, will also depend on the variation of the $A(\cdot)$, which in our gamma bridge model is controlled by parameter $\theta$. 
\section{Concluding remarks}

The aim of this paper was to increase the flexibility of the classic SIR model by introducing a stochastic nonlinear transmission rate. This approach is a structural way of introducing intrinsic uncertainty in the SIR model. It is appropriate to capture the mover-stayer phenomenon, to  account for the observed variability of the estimated reproductive numbers, to perform risk analysis on the required number of intensive care unit beds, or to disentangle observational and intrinsic (i.e. systemic) risks.

The modelling approach has used exact solutions existing in the classic SIR model. It can be immediately extended to other epidemiological models with exact solutions, as the  Susceptible-Infected-Recovered-Deceased (SIRD) model, the SIR models with random time of recovering [Guan et al. (2019)], to SIR type model in which the transmission rate is adjusted for the proportion of immunized people [the first equation in (2.1) is replaced by $dx(t)/dt = - \alpha x(t) y(t)/[x(t) + y(t)]$, and the second equation modified accordingly] [see Bailey (1980), Bohner et al. (2019)]. 

%The perfect knowledge of the properties of such extended SIR models and of their discrete time analogues are required before using these epidemiological models to evaluate the economic impacts of an epidemic [see e.g. Niepelt, Gonzales-Eiras (2020) for discussion].

The existence of a nonlinear state representation will facilitate the estimation of a stochastic SIR model. The description of such estimation method, the appropriate filtering algorithms, and their properties are left for future research [see however the discussion of the use of Kalman filter in models without intrinsic uncertainty in Breto et al. (2009) and Gourieroux, Jasiak (2020a,b)]. 

\newpage

%\newpage
\appendices
 \appendix
\section{Solution of the SPSIR model}

\setcounter{equation}{0}\def\theequation{a.\arabic{equation}}
To solve the SPSIR model, we focus on the two first differential equations, i.e. the master subsystem~:

\begin{eqnarray}
       \Frac{dx(t)}{dt} &=& -a[x(t)] y(t), \\
       \Frac{dy(t)}{dt} &=& a [x(t)] y(t) - c y(t) = \{ a[x(t)] - c\} y (t),
     \end{eqnarray}

\noindent where function $a$ is strictly positive on $]0,1]$, with $a(0)=0.$\vspace{1em}

\subsection{Equation for $x(t)$} 

From the first equation (a.1), we get~:

$$
y(t) = - \Frac{dx (t)}{dt} \Frac{1}{a[x(t)]},
$$

\noindent and by differencing we get~:

$$
\Frac{dy (t)}{dt} = - \Frac{d^2 x(t)}{dt^2} \Frac{1}{a[x(t)]} + \Frac{\Frac{da}{dx} [x(t)]}{a[x (t)]^2} \left(\Frac{dx(t)}{dt}\right)^2.
$$

Then, by plugging in these expressions in eq. (a.2), we get~:

$$
-\Frac{d^2 x(t)}{dt^2} \Frac{1}{a[x(t)]} + \Frac{\Frac{da}{dx} [x(t)]}{a[x(t)]^2} \left( \Frac{dx (t)}{dt}\right)^2 = - \{ a[x(t)]-c\} \Frac{dx(t)}{dt} \Frac{1}{a[x(t)]},
$$

\noindent or equivalently,

\begin{equation}
  -\Frac{d^2 x(t)}{dt^2} + \Frac{d\log a[x(t)]}{dx} \left[ \Frac{dx(t)}{dt}\right]^2 + \{a [x(t)]-c\} \Frac{dx(t)}{dt} = 0.
\end{equation}

\subsection{Time deformation} 
By the continuity and strict monotonicity of function $x(t)$, we deduce that this function is invertible. Its inverse $u(x)$ can be seen as a time deformation and satisfies the equality~:

\begin{equation}
  u[x(t)] = t, \forall t.
\end{equation}

A first differencing of this equality leads to~:

$$
  \Frac{du [x(t)]}{dx} \Frac{dx(t)}{dt} = 1,
$$

\noindent or, equivalently,

\begin{equation}
  \Frac{dx(t)}{dt} = 1/ \Frac{du [x(t)]}{dx} .
\end{equation}

A second differentiation of (a.4) leads to~:

$$
\Frac{d^2 u [x(t)]}{dx^2} \left( \Frac{dx (t)}{dt}\right)^2 + \Frac{du[x(t)]}{dx}  \Frac{d^2 x (t)}{dt^2} = 0,
$$

\noindent or, by (a.5)~:

\begin{equation}
  \Frac{d^2 x(t)}{dt^2} = - \Frac{d^2 u [x(t)]}{dx^2} / \left[ \Frac{du [x(t)]}{dx}\right]^3.
\end{equation}

\subsection{Equation for the time deformation}
By using (a.5)-(a.6), we deduce the differential equation satisfied by time deformation $u(x).$ We get~:

$$\Frac{d^2 u(x)}{dx^2} / \left[ \Frac{du(x)}{dx}\right]^3 + \Frac{d \log a (x)}{dx} / \left[ \Frac{du(x)}{dx} \right]^2 + [a (x) - c] / \Frac{du (x)}{dx} = 0,$$

\noindent or

\begin{eqnarray}
\Frac{d^2 u(x)}{dx^2} / \left[ \Frac{du(x)}{dx}\right]^2 &+& \Frac{d \log a (x)}{dx} / \Frac{du(x)}{dx} + a (x) - c  = 0.
\end{eqnarray}

\subsection{Solution of equation (a.7)}

Let us denote~:

\begin{equation}
  v(x) = \Frac{du}{dx} (x).
\end{equation}

Equation (a.7) becomes~:

\begin{equation}
  \Frac{1}{v^2(x)} \Frac{dv (x)}{dx} + \Frac{d\log a(x)}{dx} \Frac{1}{v(x)} + a(x) - c = 0.
\end{equation}

This is a Riccati equation with functional coefficients [Harko et al. (2014)].

Let us denote~:

\begin{equation}
  w(x) = 1/v(x).
\end{equation}

We get~:

\begin{equation}
  -\Frac{dw(x)}{dx} + \Frac{d\log a(x)}{dx} w(x) + a(x) - c = 0.
\end{equation}

This differential equation is linear of order 1. As usual it is solved in two steps.\vspace{1em}

i) First we consider the associated homogeneous equation~:

$$
\begin{array}{l}
- \Frac{dw(x)}{dx} + \Frac{d\log a(x)}{dx} w (x) = 0, \\ \\
\Longleftrightarrow \Frac{d\log w(x)}{dx} = \Frac{d\log a(x)}{dx},
\end{array}
$$

whose solution is~:

$$
w(x) = K a(x),
$$

\noindent where $K$ is a constant.\vspace{1em}

ii) Then we apply the varying constant technique and look for a solution of (a.11) of the form~:

$$
w (x) = K(x) a(x).
$$

We get~:

\begin{eqnarray}
&-&\Frac{dK(x)}{dx} a(x) - K(x) \Frac{d a(x)}{dx} + \Frac{d \log a(x)}{dx} K(x) a(x) + a(x) - c = 0 \nonumber \\
&\Longleftrightarrow & - \Frac{dK(x)}{dx} a(x) + a(x) - c = 0 \nonumber \\
&\Longleftrightarrow &  \Frac{dK(x)}{dx} = 1- \Frac{c}{a(x)} \nonumber \\
&\Longleftrightarrow & K(x) = \Int^x_1 \left[1-\Frac{c}{a(u)} \right] du + K,
\end{eqnarray}

\noindent where $K$ is an integrating constant.\vspace{1em}

Therefore we get~:
\begin{eqnarray}
  w(x) = 1/v(x) & = & 1/\Frac{du}{dx} (x) = a(x) \{ \Int^x_1 \left[1-\Frac{c}{a(u)}\right] du + K\} \nonumber \\
  &=& a(x) \{ x-1 - c \Int^x_1 \Frac{du}{a(u)} + K\}.
\end{eqnarray}

The constant $K$ can be derived from the initial conditions for $x$ and $y$, $x(0), y(0),$ say. From eq. (a.1) written for $t=0$, we get~:

$$
\Frac{dx(0)}{dt} = 1/ \Frac{du}{dx} [x(0)] = w [x(0)] = -a [x(0)] y(0).
$$

Therefore by (a.13), we get~:

$$
w [x(0)] = -a[x(0)] y(0) = a[x(0)] \{ x (0) - 1-c \Int^{x (0)}_1 \Frac{du}{a(u)} + K\}
$$

\begin{equation}
  \Longleftrightarrow - y(0) = x(0) -1-c \Int^{x(0)}_1 \Frac{du}{a(u)} + K.
\end{equation}

This provides the constant $K$ and eq. (a.12) becomes~:

\begin{equation}
  w(x) = a(x) \{ x-x(0)-y(0) - c \Int^x_{x(0)} \Frac{du}{a(u)}\}.
\end{equation}

Since $\Frac{du}{dx} = \Frac{1}{w(x)}$, we deduce the expression of $u(x)$:

\begin{equation}
  u(x) - u [x(0)] = \Int^x_{x(0)} \Frac{dv}{a(v) \{ v-x(0) - y(0)-c \Int^v_{x(0)} \Frac{du}{a(u)}\}}.
\end{equation}

The result in Proposition 3 is obtained by replacing $x$ by $x(t)$ and noting that $u [x(t)] = t$.\vspace{1em}

\section{Condition for herd immunity}
 
\subsection{Proofs of Proposition 4 and Corollary 2}
\begin{proof}[Proof of $4 i)$]
Let us define the function:
\begin{equation}
\label{functiong}
%h_\epsilon(v):=
h(v):=v-x(0)-y(0)-c \int_{x(0)}^v \frac{\mathrm{d}u}{a(u)}=0,
\end{equation}
where argument $v$ varies between 0 and 1. %and we have introduced the notation $\epsilon=1-x(0)$.
We have: $h(x(0))=-y(0)<0$, and $h(0)=-x(0)-y(0)+c \int_0^{x(0)} \frac{\mathrm{d}u}{a(u)}$, with the convention that $h(0)=\infty$ if $\frac{1}{a(u)}$ is non integrable at $0$. Since $h(\cdot)$ is continuous for a given $x(0)$, there exists at least one solution $v$ between $0$ and $x(0)$, if and only if $-x(0)-y(0)+c \int_0^{x(0)} \frac{\mathrm{d}u}{a(u)}>0$. 

When at least one solution exists, the solution is unique over $]0,x(0)[$ since $\frac{\mathrm{d}h}{\mathrm{d}v}(v)=1-\frac{c}{a(v)}$ is increasing over $[0,1]$. %Hence we have proven assertion $4. ii)$
Hence we have proven assertion $4. i)$.
\end{proof}

\begin{proof}[Proof of Corollary 2]
By tending $x(0)$ towards $1$, we get equivalently: $$
c \int_0^1 \frac{\mathrm{d}u}{a(u)}>1,
$$
Hence we have proven Corollary 2.
\end{proof}

\begin{proof}[Proof of $4 ii)$]
Let us now show that, if the herd immunity is reached, it is necessarily reached at infinity. To this end, we proceed by contradiction. Assume that $x(T)=v^*$ for a finite $T$, then by replacing $t$ by $T$ and $x(t)$ by $v^*$ in equation \eqref{integralequation}, we deduce that the integral on the right hand side is necessarily finite. But by a first-order Taylor's expansion and assuming $a(\cdot)$ continuous\footnote{Here the continuity assumption is introduced merely for expository purpose. Indeed, since $a(\cdot)$ is assumed increasing, even if it is not continuous, we can always bound $1/a(v)$ from below by, $1/a(v^*)$ and the arguments below remain valid.}, we deduce that for $v$ close to $v^*$, the integrand is equivalent to:
$$
 \frac{1}{a(v^*)\Big[
 	1-\frac{c}{a(v^*)}\Big](v-v^*)},  
$$
which is non integrable at $v=v^*$. Thus we have a contradiction. %Thus we have $T=\infty$. Hence we have proven assertion $4. ii)$.
\end{proof}
\begin{proof}[Proof of $4 iii)$]
If inequality \eqref{herdimmunitycondition} is not satisfied, then \textit{a fortiori} $1/a(x)$ is integrable at zero since $\int_0^1 \frac{\mathrm{d}u}{a(u)}$ is finite. Hence $0$ cannot be a pole, either. In other words, the integrand at the right hand side of \eqref{integralequation} is finite for any value of $x(t)$ between $0$ and $x(0)$. Then the right hand side is upper bounded. Hence in equation \eqref{integralequation}, the admissible $t$ is upper bounded. Let us denote by $T$ its supremum. Then $x(T)$ is necessarily 0, since, otherwise, equation \eqref{integralequation} is still well defined beyond $T$, which contradicts the definition of supremum. %Hence we have proven assertion $4. iii)$.
\end{proof}
\subsection{Proof of Proposition 5}
Let us first define, for any $\epsilon \geq0$, the function: 
\begin{equation}
\label{functiong0}
h_\epsilon(v):=v-1-c \int_{1-\epsilon}^v \frac{\mathrm{d}u}{a(u)} 
\end{equation}
and consider the equation: $h_0(v)=0$. 
Clearly, $v=1$ is always a solution to this equation. By a similar proof as that of Proposition 3, we deduce that on $]0, 1[$, there is one and only one other solution $v_0$, if and only if $a(1)>c$.

$i)$ Let us first consider the case $a(1)>c$. In this case, the solution $(x_\epsilon(\infty))$ satisfies the inequality:
\begin{equation}
\label{boundedaway}
v_\epsilon < a^{-1}(c),
\end{equation}
where $a^{-1}$ is the inverse of function $a$. Indeed $a^{-1}(c)$ is the point where function $h_\epsilon(\cdot)$ attains its minimum over $0$ and $1$, for any value of $\epsilon$. In other words, this solution is uniformly away from 1 when $\epsilon$ varies.  

Then using elementary inequalities, we have that $v_\epsilon$ is increasing when $\epsilon$ decreases. But since $v_\epsilon$ is also upper bounded by inequality \eqref{boundedaway}, we deduce that $v_\epsilon$ admits a limit, say $v'$, when $\epsilon$ decreases to zero, and this limit still satisfies $v' < a^{-1}(c)$. Then it suffices to remark that the bivariate function:
$$
(\epsilon, v) \mapsto h_\epsilon(v)
$$
is continuous. Thus since for any $\epsilon$, we have $h_\epsilon(v_\epsilon)=0$, by letting $\epsilon$ go to zero, we have $h_0(v')=0$. That is, $v'$ is also a zero of equation \eqref{functiong0}. Then by Proposition 1 and inequality \eqref{boundedaway}, we conclude that $v'=v_0$. 

$ii)$ Let us now consider the alternative case where $a(1) \leq c$. By similar arguments as above, when $\epsilon$ goes to zero, $v_\epsilon$ is increasing and thus has a limit that is non larger than 1. Then we conclude that this limit is actually a zero of equation \eqref{functiong0}, hence the only possibility is that this limit is 1. 

%Function $g$ is such that~:
%
%$$
%g(0) = - \infty, g(1) = 1.
%$$

\section{Euler Discretization}
 
To avoid the numerical inversion of the integral equation (2.2) in Proposition 2, it is standard to approximate the continuous time solution of system (2.1), by considering the solution of its Euler time discretization. More precisely we fix some basic time unit $1/h$, say, where $h$ is interpreted as a time frequency, and consider the approximation~: $\tilde{x}_h (t) = x_h ([th]), \tilde{y}_h (t) = y_h ([th]), \tilde{z}_h(t) = z_h ([th]),$ where $[.]$ denotes the integer part, and the trivariate discrete time process $x_h (n), y_h (n), z_h (n)$ satisfies the system of recursive equations~:

\begin{equation}
  \left\{
      \begin{aligned}
        x_h(n) - x_h (n-1)&=&-\Frac{1}{h}& a [x_h (n-1)] y_h (n-1),\\
        y_h (n) - y_h (n-1)&=&\Frac{1}{h}& \{ a[x_h (n-1)] y_h (n-1) - cy_h (n-1)\},\\
        z_h (n) - z_h (n-1)&=&\Frac{1}{h}& c y_h (n-1).\\
      \end{aligned}
    \right.
\end{equation}

For ultra high frequency, that is if timestep $1/h$ tends to infinity, it is expected that~:

$$
\mbox{sup}_t |x(t) - \tilde{x}_h (t)|, \mbox{sup}_t |y(t) - \tilde{y}_h (t)|, \mbox{sup}_t |z(t) - \tilde{z}_h (t)|,
$$

\noindent tend to zero. This arises under sufficient Lipschitz conditions [Lakoba (2012)]. However, these conditions are not always satisfied in our nonlinear framework [see also Breto et al. (2009), Section 2.1 for a similar remark] when there is herd immunity and this standard approximation might no longer be used to simulate observed trajectories, or to derive numerically the long run properties of the model. {More precisely some Euler discretizations can have chaotic behaviours or spurious cycles that do not exist in the continuous time analogue [see Allen (1994)]}. We consider below different Euler discretizations of the SIR model and discuss their asymptotic behaviour.\vspace{1em}

The main result of this appendix concerns the herd immunity. Several Euler discretizations do not provide the same condition for herd immunity than their continuous counterpart, even if the timestep is very small. 
\subsection*{Example A.1~: Exact time discretization}\vspace{1em}

Let us denote $\tilde{x} (t;\theta)$ with $\theta = (x(0), y(0), c, a)$ the solution (2.2) of the SIR model and $\tilde{x}^{-1} (.;\theta)$ its inverse with respect to time. Since $t=(t-1) +1$, we have : $\tilde{x}^{-1} [x(t);\theta] = \tilde{x}^{-1} [x(t-1);\theta]+1, \forall t \in (0,\infty),$ or $x(n) = \tilde{x} [\tilde{x}^{-1} [x (n-1);\theta] + 1], \forall n$. This is the exact nonlinear difference equation of $x(n)$, with the same asymptotic properties than $x(t)$ by construction.\vspace{1em}

\subsection*{Example A.2~: Crude Euler discretization (a.17)}\vspace{1em}

Let us now consider the Euler discretization (a.17). Without loss of generality, we can set $h=1$, and omit index $h$ in system (a.17). We have the following properties~:

\vspace{1em}

\textbf{Lemma 1~:} If $0<a(x)<x$, for $x \in ]0,1], 0<c<1, x(0)>0, y(0)>0, z(0) = 1-x(0) - y(0) \geq 0,$ then,

i) The solution of the discrete time approximation is such that~: 

$x(n) > 0, y(n)> 0, z(n)\geq 0, \forall n$.

ii) The sequence $x(n)$ is decreasing with a limit $x(\infty)$.

iii) The sequence $z(n)$ is increasing with a limit $z(\infty)$.

iv) The sequence $y(n)$ tends to $y(\infty) = 0$.\vspace{1em}

\textbf{ Proof :} i) This is proved by recursion, assuming $x(n-1) > 0, y (n-1) > 0, z(n-1) \geq 0$, with by construction: $x(n-1) + y(n-1)+z(n-1) = 1.$\vspace{1em}

This is a consequence of~:

$x(n) = x(n-1) \{1-\Frac{a[x(n-1)]}{x(n-1)} y (n-1)\}$, with $\Frac{a[x(n-1)]}{x(n-1)} < 1$ and $y(n-1) <1,$ of $y(n) = y(n-1) \{ 1-c + a[x(n-1)]\},$ with $y(n-1) >0,1-c>0, a [x(n-1)]>0,$ and of $z(n) = z(n-1) + cy(n-1).$\vspace{1em}

ii) Indeed~:

$x(n) = x(n-1) - a[x(n-1)] y (n-1)$, with $a [x(n-1)] y (n-1) > 0.$

iii) The result is straightforward~:

iv) By considering the equation $z(n) = z(n-1) + cy (n-1)$, when $n$ tends to infinity, we get $y (\infty) = 0$.

\QED\vspace{1em}

\textbf{Remark a.1~:} When the timestep is introduced, the condition on function $a$ becomes $\frac{1}{h}a(x) < x.$\vspace{1em}

\textbf{Remark a.2~:} When $a(x) = \alpha x^\beta,$ the condition $a(x) <x, x \in (0,1)$, is equivalent to $\beta \geq 1$, that is the same condition for $\beta$ as in the continuous SIR model.\vspace{1em}

\textbf{Proposition A.1~:} Under the assumptions of Lemma 1, the limit $x(\infty)$ is strictly positive. \vspace{1em}

\textbf{Proof~:} We have~:

$$
x(n) = x(n-1) - a [x(n-1)] y(n-1) > x(n-1) [1-y(n-1)].
$$

Therefore,

$$
x(\infty) > x(0) \exp \left[ \Sum^\infty_{j=0} \log [1-y (j)]\right].
$$

The lower bound is strictly positive, if the negative series $\log [1-y(n)]$ is summable, or, equivalently since $y(n)$ tends to zero, if the series $y(n)$ is summable. This summability condition is satisfied since $\Sum^\infty_{n=0} y(n) = z(\infty) - z(0) \leq 1.$

\QED

Next under an additional assumption on transmission function $a$, we get the analogue of Proposition 2 on the evolution of $y(n)$ and the possibility of a peak of the epidemic.\vspace{1em}

\textbf{Proposition A.2~:} Under the assumptions of Lemma 1 and if $a$ is an increasing function, then there exists a value $n^* \geq 0$ such that the sequence $y(n)$ is positive, strictly decreasing after $n^*$.\vspace{1em}

\textbf{Proof~:} We have~: $y(n) - y(n-1) = y(n-1) [-c + a [x(n-1)]]$. Three cases can be distinguished.

i) If $-c+a[x(0)] > 1 \Longleftrightarrow \Frac{a[x(0)]}{c} < 1$, we have~: $-c + a [x(n-1)] < -c+a [x(0)], \forall n$, and the sequence $y(n)$ is positive strictly decreasing.\vspace{1em}

ii) If $-c + a[x(0)]>1$, and if $-c+a[x(0)]>1, \forall n$, then the sequence $y(n)$ would be strictly increasing, and then ~: $z(n) = z(0) + y(0) + \ldots + y(n-1)$  would be larger than $z(0) + n y(0),$ and would tend to infinity. This contradicts the fact that $z(n) \leq 1$ (by Lemma 1).\vspace{1em}

iii) Therefore, if $-c + a[x(0)]> 1$, there exists a value $n^* > 0$ such that $y(n)$ increases before $n^*$ and decreases after $n^*$.

\QED

\section{Frailty model}
A popular survival model for the mover-stayer phenomenon is the (proportional) frailty, or unobserved heterogeneity model\footnote{See Farrington and Whitaker (2003) for an application of frailty models to epidemiology. }. Example 2 above allows also for a similar interpretation in the special case where $\beta>1$. Let us assume that for each susceptible individual, given  a static, individual-specific unobserved heterogeneity random variable $\theta$, its conditional instantaneous hazard function of being infected is $\theta y(t)$. Then its conditional survival function given $\theta$ is:
$$
S(t|\theta)=\exp\Big(-\theta  \int_0^t  y(s)\mathrm{d}s\Big)
$$
and the unconditional survival function is, by definition, $x(t)$. That is:
\begin{equation}
	\label{unconditionalsurvivor}
	x(t)=\mathbb{E}[S(t|\theta)]=\mathbb{E}\Big[\exp\Big(-\theta \int_0^t  y(s) \mathrm{d}s \Big) \Big]=\mathcal{L}\Big( \int_0^t  y(s)\mathrm{d}s\Big),
\end{equation}
where $\mathcal{L}(u)=\mathbb{E}[e^{-u \theta}]$, $u>0$, is the Laplace transform of random variable $\theta$ taken at $u$. Let us for instance assume that $\theta$ follows the gamma distribution with rate parameter $\kappa$ and degrees of freedom $\delta$. Then we have $x(t)= \frac{1}{(1+\frac{1}{\kappa}\int_0^t  y(s) \mathrm{d}s )^\delta}$. Then by differentiating, we get:
\begin{align*}
	\frac{\mathrm{d}x(t)}{\mathrm{d}t}=-\frac{\delta}{\kappa}  [x(t)]^{\frac{\delta+1}{\delta}}y(t).
\end{align*}
Hence we recover the model in Example 2, with $\alpha=\frac{\delta}{\kappa}$ and $\beta=\frac{\delta+1}{\delta}>1$. In particular, while the conditional hazard $\theta y(t)$ depends only on $y(t)$ but is stochastic because of $\theta$, the unconditional hazard is equal to $\frac{\delta}{\kappa} y(t) [x(t)]^{\frac{1}{\delta}} $ is also decreasing in $x(t)$ because of the mover-stayer phenomenon. 

The gamma distribution assumption is often regarded the ``reference" distribution in survival analysis due to both its mathematical tractability and theoretical, asymptotic properties [see Abbring and van den Berg (2007)]. Alternatively, other functional forms for the distribution $\theta$ are equally possible and we would get, in general, $$
x'(t)=-y(t)\mathcal{L}' \circ \mathcal{L}^{(-1)} [x(t)],
$$
where $\mathcal{L}^{(-1)}$ is the inverse function of the Laplace transform. In particular, so long as this inverse is tractable (and hence the derivative of this latter), we have $a(x)=\mathcal{L}' \circ \mathcal{L}^{(-1)}(x)=\frac{1}{\frac{\mathrm{d}}{\mathrm{d}x}\mathcal{L}^{(-1)}(x)}$ will be tractable. 
%by another distribution on $\mathbb{R}_{>0}$, then we will get other functional forms for $a(\cdot)$. 
\vspace{1em}

For instance, another population frailty distribution in the survival analysis literature is the $\alpha-$stable distribution, see e.g. Hougaard (1986). More precisely, if $\theta$ follows the positive  with $\mathcal{L}(u)=e^{-u^{\alpha_s}}$ where the stability parameter $\alpha_s$ is between 0 and 1, then we get:
$$
\frac{\mathrm{d}x(t)}{\mathrm{d}t}=-\alpha_s x(t) y(t)[- \ln x(t)]^{\frac{\alpha_s-1}{\alpha_s}}.
$$
Thus we have $a(x)=\alpha_s x [- \ln x]^{\frac{\alpha_s-1}{\alpha_s}}$. This function is well defined on $[0,1]$, but $a(\cdot)$ is not differentiable\footnote{Indeed, the $\alpha-$stable distribution has an infinite mean. } at $x=1$. In other words, under this model, the initial progression of the epidemic is extremely quick.

\end{document}